\patchcmd{\@settitle}{\uppercasenonmath\@title}{\scshape\large}{}{}
\patchcmd{\@setauthors}{\MakeUppercase}{\scshape\normalsize}{}{}
\newif\ifjournal\journalfalse
\newlist{thmparts}{enumerate}{1}
\setlist[thmparts]{
	label=\alph*)
	% label=\arabic*.
}
\apptocmd{\cref@getref}{\xdef\@lastusedlabel{#1}}{}{error}
	\StrCount{\@lastusedlabel}{:}[\LastColonPos]%
	\StrCount{\@lastusedlabel}{:}[\LastColonPos]%
\Crefname{assumption}{Assumption}{Assumption}
\Crefname{proposition}{Proposition}{Propositions}
\Crefname{theorem}{Theorem}{Theorems}
\Crefname{lemma}{Lemma}{Lemmas}
\newcommand{\jscom}[2][]{
	\ifthenelse{\equal{#1}{journal}}{
	\ifjournal\todo[author=JS, color=yellow!50, size=\small]{#2}\fi}{\todo[author=JS, color=yellow!50, size=\small]{#2}}
	}
\DeclareMathOperator*{\argmin}{arg\,min}
\DeclareMathOperator*{\argmax}{arg\,max}
\newcommand{\inte}{\mathrm{int}}
\newcommand{\con}{\mathrm{con}}
\newcommand{\conv}{{\mathrm{conv}}}
\newcommand{\conc}{\mathrm{conc}}
\newcommand{\Proj}{\mathrm{Proj}}
\newcommand{\proxy}{\eta}
\newcommand{\piproxy}{\xi}
\newcommand{\VBR}{\Phi}
\newcommand{\BR}{\mathrm{BR}}
\newcommand{\abs}[1]{\lvert#1\rvert}
\newcommand{\extray}{r}
\newcommand{\mappro}{\alpha}
\newcommand{\addappro}{\beta}
\newcommand{\mydim}{d}
\newcommand{\R}{\mathbb{R}}
\newcommand{\Z}{\mathbb{Z}}
\newcommand{\N}{\mathbb{N}}
\newcommand{\FeasRt}{F_t}
\NewDocumentCommand{\freeset}{O{}}{S^{#1}}
\newcommand{\powset}{\mathcal{P}}
\newcommand{\BRcom}{\mathcal{BR}}
\newcommand{\Nesapp}{\mathcal{E}_{(\mappro,\addappro)}}
\newcommand{\Nespiapp}{\mathcal{E}^{\VBR,\pi}_{(\mappro,\addappro)}}
\NewDocumentCommand{\FeasStrats}{O{}}{W_{#1}}
\NewDocumentCommand{\FeasStratsInt}{O{}}{W_{#1}^\inte}
\NewDocumentCommand{\FeasStratsCon}{O{}}{W_{#1}^\con}
\NewDocumentCommand{\FeasStratsRel}{O{}}{\hat{W}_{#1}}
\newcommand{\Rall}{\prod_{j \in N}\R^{k_j+l_j}}
\newcommand{\Rallmini}{\prod_{j \neq i}\R^{k_j+l_j}}
\newcommand{\cut}[1][]{ANE-cut#1\xspace}
\NewDocumentCommand{\acut}{O{ }}{an ANE-cut#1}
\NewDocumentCommand{\Cplus}{O{}O{ }}{$\appro_{#1}^>$-witness#2}
\newcommand{\Nefreeset}[1][]{$(\mappro,\addappro)$-NE-free set#1\xspace}
\NewDocumentCommand{\minapprox}{O{ }}{simple binary search method#1}
\newcommand{\reusetreesearch}{\textsf{singleTree+Cuts}}
\newcommand{\reusewithoutcuts}{\textsf{singleTree}}
\NewDocumentCommand{\eqtuple}{O{ }}{equilibrium tuple#1}
\NewDocumentCommand{\BinBC}{O{ }}{adaptive B\&C method#1}
\NewDocumentCommand{\aBinBC}{O{ }}{an adaptive B\&C method#1}
\newcommand{\AddThm}[2]{
	\newtheorem{#1}[thmcntr]{#2}
	\AddToHook{env/#1/begin}{\crefalias{thmcntr}{#1}}
}
\theoremstyle{definition}
\theoremstyle{plain}
\theoremstyle{definition}
\newtheorem{definition}{Definition}[section]
\newtheorem{assumption}[definition]{Assumption}
\newtheorem{theorem}[definition]{Theorem}
\newtheorem{proposition}[definition]{Proposition}
\newtheorem{lemma}[definition]{Lemma}
\newcommand{\wrt}{w.r.t.\ }
\NewDocumentCommand{\NeInt}{O{\ }}{TODO{#1}}
\newcommand{\defset}[3][\defsep]{\set{#2#1#3}}
\newcommand{\Defset}[3][\defsep]{\Set{#2#1#3}}
\newcommand{\set}[1]{\{#1\}}
\newcommand{\st}{\text{s.t.}}
\newcommand{\define}{\mathrel{{\mathop:}{=}}}
\newcommand{\nonNegInts}{\mathbb{Z}_{\geq 0}}
\begin{document}

\title[Branch-and-Cut for Computing Approximate Equilibria of Nash Games]%
{Branch-and-Cut for Computing Approximate Equilibria of
  Mixed-Integer Generalized Nash Games}

\author[A. Duguet, T. Harks, M. Schmidt, J. Schwarz]%
{Aloïs Duguet, Tobias Harks, Martin Schmidt, Julian Schwarz}

\address[A. Duguet, M. Schmidt]{%
  Trier University,
  Department of Mathematics,
  Universitätsring 15,
  54296 Trier,
  Germany}
\email{duguet@uni-trier.de}
\email{martin.schmidt@uni-trier.de}

\address[T. Harks, J. Schwarz]{%
  University of Passau,
  Faculty of Computer Science and Mathematics,
  94032 Passau,
  Germany}
\email{tobias.harks@uni-passau.de}
\email{julian.schwarz@uni-passau.de}

\date{\today}

\begin{abstract}
  Generalized Nash equilibrium problems with mixed-integer
variables constitute an important class of games in which
each player solves a mixed-integer optimization problem,
where both the objective and the feasible set is parameterized by the
rivals' strategies.
However, such games are known for failing to admit exact equilibria
and also the assumption of all players being able to solve nonconvex problems to
global optimality is questionable.
This motivates the study of approximate equilibria.
In this work, we consider an approximation concept that incorporates
both multiplicative and additive relaxations of optimality.
We propose a branch-and-cut (B\&C) method that computes such
approximate equilibria or proves its non-existence.
For this, we adopt the idea of intersection cuts and show the
existence of such cuts under the condition that the constraints are
linear and each player’s cost function is either
convex in the entire strategy profile, or,
concave in the entire strategy profile and linear in the rivals’
strategies.
For the special case of standard Nash equilibrium problems, we
introduce an alternative type of cut and show that the method
terminates finitely, provided that each player has only finitely many
distinct best-response sets.
Finally, on the basis of the B\&C method, we introduce
a single-tree binary-search method to compute best-approximate
equilibria under some simplifying assumptions.
We implemented these methods and present numerical results
for a class of mixed-integer flow games.

%%% Local Variables:
%%% mode: latex
%%% TeX-master: "approximate-gne-via-bnc-preprint"
%%% End:
\end{abstract}

\keywords{Nash equilibrium problems,
Generalized Nash equilibrium problems,
Mixed-integer games,
Approximate equilibria,
Branch-and-cut%
%
%
%%% Local Variables:
%%% mode: latex
%%% TeX-master: "branch-and-cut-for-ipgs-preprint"
%%% End:
}
\subjclass[2020]{90C11, % Mixed integer programming
90C57, % Polyhedral combinatorics, branch-and-bound, branch-and-cut
91-08% Computational methods for problems pertaining to game theory, economics, and finance
%
%%% Local Variables:
%%% mode: latex
%%% TeX-master: "approximate-gne-via-bnc-preprint"
%%% End:}

\maketitle

\section{Introduction}
\label{sec:introduction}

Generalized Nash equilibrium problems (GNEPs) arise in various
domains including market games in economics \parencite{ArrowDebreu},
communication networks \parencite{Kelly98},
transportation systems \parencite{Beckmann56},
and electricity markets \parencite{Anderson13}.
Since the seminal works of \textcite{ArrowDebreu}, significant
progress has been made in understanding the existence and computation
of generalized Nash equilibria (GNEs).
One key assumption for ensuring the existence of equilibria is the
convexity of the data of the game.
The problem, however, is that many timely and important applications
of GNEPs contain substantial nonconvexities.
One prominent example are power markets on which both electricity
producers and consumers act.
The mixed-integer nature of the power producers' models render the
resulting GNEPs highly complex and lead to the invalidity of classic
existence theorems; see, e.g.,
\textcite{Liberopoulos_Andrianesis:2016,guo2021copositive} and the
references therein.
For instance, based on the complete characterization of the existence
of equilibria in \textcite{HarksSchwarzPricing},
\textcite{Gruebel_et_al:2023a} prove the non-existence of equilibria
for many power as well as gas market instances.

Consequently, algorithms designed to compute GNEs may fail to
terminate and often cannot provide any conclusive information about
the equilibrium problem under consideration.
Nevertheless, since the underlying real-world applications are
of utmost importance, there is the clear need for alternative solution
concepts that are less restrictive then classic Nash equilibria.
Another aspect that goes along with the nonconvexities in many
applications is that practical instances are usually not solved to
global optimality anyways---a classic example of bounded rationality
in theoretical economics \parencite{Simon:1972,Rubinstein:1998}, which
arises when the decision-making frameworks of the agents are too complex.
To address both of these fundamental challenges, we
study the concept of approximate mixed-integer GNEs, for which we (i)
introduce a new and more general notion of approximation, (ii) provide
the first branch-and-cut (B\&C) method to solve these problems, (iii)
develop further algorithmic enhancements to compute best-approximate
GNEs, and (iv) present a small numerical study that shows the
applicability of our approach.

%%%%%%%%%%%%%%%%%%%%%%%%%%%%%%%%%%%%%%%%%%%%%%%%%%%%%%%%%%%%%%%%%%%%%%%%%%%
\subsection{Our Contributions}

We now outline our approximate equilibrium concept and the B\&C
framework in greater detail in the following.

%%%%%%%%%%%%%%%%%%%%%%%%%%%%%%%%%%%%%%%%%%%%%%%%%%%%%%%%%%%%%%%%%%%%%%%%%%%
\vspace*{0.5em}
\paragraph{\textbf{$(\mappro,\addappro)$-Nash Equilibria}}
\label{sec:alpha-beta-nash}

We study an equilibrium concept that incorporates a multiplicative as
well as an additive relaxation of optimality.
More precisely, we say that a strategy profile $x$ is an
$(\mappro,\addappro)$-Nash equilibrium ($(\mappro,\addappro)$-NE) if
\begin{align*}
  \pi_i(x) \leq \mappro_i \pi_i(y_i,x_{-i}) + \addappro_i
  \quad \text{for all} \quad
  y_i \in X_i(x_{-i})
\end{align*}
holds for all players~$i$.
Here, $\pi_i$ denotes the cost function of player $i$, $(y_i,x_{-i})$
denotes the strategy profile in which player $i$ unilaterally deviates
to a strategy~$y_i$, which is contained in her set of feasible
strategies~$X_i(x_{-i})$ under the rivals' strategies~$x_{-i}$.
The above condition then requires that no player can improve her costs by
unilaterally changing her strategy up to a player-specific
multiplicative factor~$\mappro_i \geq 1$ together with an additive
value~$\addappro_i \geq 0$.
Developing our framework on the basis of this approximate equilibrium
concept allows to choose the approximation values in an
instance-specific way and even to consider a Pareto-frontier of
minimum $(\mappro,\addappro)$ values such that a corresponding
approximate NE exists.
This is highly beneficial as the appropriate notion of approximation
crucially depends on each specific instance, e.g., on the range of
values of the cost functions.
We refer to \textcite{Daskalakis13} for a detailed discussion of pros
and cons of multiplicative and additive approximations of equilibria.

%%%%%%%%%%%%%%%%%%%%%%%%%%%%%%%%%%%%%%%%%%%%%%%%%%%%%%%%%%%%%%%%%%%%%%%%%%%
\vspace*{0.5em}
\paragraph{\textbf{A B\&C Framework}}
\label{sec:bc-framework}

We propose a novel B\&C framework to compute $(\mappro,\addappro)$-NEs
or to prove non-existence.
To this end, we reformulate the problem of finding an
$(\mappro,\addappro)$-NE as an optimization
problem in which we minimize an auxiliary variable~$\lambda$
that represents via proper constraints the maximum over all players'
approximate regrets, i.e., the deviation between their current costs
and a respective best-response value scaled by $\mappro_i$
and additively increased by $\addappro_i$.
Inspired by techniques from bilevel optimization, we relax this problem by
relaxing the integrality constraints for the strategy profiles as
well as introducing for every player $i$ two
proxy-variables~$\proxy_i$ and~$\piproxy_i$ and substituting those for
the respective best-response value and costs.
This relaxed problem is then embedded into a B\&C method:
We branch on fractional variables and via suitably constructed cuts in
the space of strategy profiles, maximal regret, and proxy
variables $(x,\lambda,\proxy,\piproxy)$, the approximation quality of
the proxy-variables is iteratively improved until an
$(\mappro,\addappro)$-NE is found or proven that none exists (in the
current node); see Section~\ref{sec:algorithm}.
Below we discuss an adaptive extension of this method to compute some
$(\mappro,\addappro)$ for which we ensure that there exists a
respective approximate equilibrium.

%%%%%%%%%%%%%%%%%%%%%%%%%%%%%%%%%%%%%%%%%%%%%%%%%%%%%%%%%%%%%%%%%%%%%%%%%%%
\vspace*{0.5em}
\paragraph{\textbf{New Cuts}}
\label{sec:new-cuts}

In this B\&C method, the main challenge lies in constructing cuts that
exclude integer-feasible optimal node solutions
$(x^*,\lambda^*,\proxy^*,\piproxy^*)$ while preserving all
\eqtuple[s], i.e., tuples $(x,\lambda,\proxy,\piproxy)$
where $x$ represents an equilibrium, the regret satisfies $\lambda\leq
0$, and $\proxy,\piproxy$ correspond to their respective approximation
quantities (best-response value and costs at $x$).
For GNEPs, we employ the theory of intersection cuts (ICs), which
were originally introduced by \textcite{balas1971intersection} in the
context of integer programming.
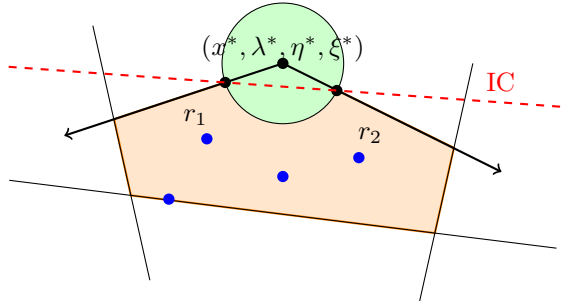
\begin{figure}
  \begin{tikzpicture}[xscale=1]

  \coordinate (A) at (0,1.25);
  \coordinate (B) at (-0.5,3.5);
  \coordinate (C) at (4,0.75);
  \coordinate (D) at (4.5,3);
  \coordinate (F) at (-0.22,2.26);
  \coordinate (E) at (4.25,1.88);
  \coordinate (x) at (2,3);
  \coordinate (s2) at (4,2);
  \coordinate (s1) at (0,2.33);

  \draw[thick,orange,fill=orange!20] (A) -- (C) -- (E) -- (x) -- (F) -- cycle;

  \draw ($ (A) + 0.5*(A) - 0.5*(B) $) -- (B);
  \draw ($ (A) + 0.4*(A) - 0.4*(C) $) -- ($(C) + 0.4*(C) -0.4*(A)$);
  \draw (D) -- ($(C) + 0.4*(C) -0.4*(D)$);

  \node[circle, fill=blue, inner sep=1.5pt]  at (2,1.5) {};
  \node[circle, fill=blue, inner sep=1.5pt]  at (3,1.75) {};
  \node[circle, fill=blue, inner sep=1.5pt]  at (1,2) {};
  \node[circle, fill=blue, inner sep=1.5pt]  at (0.5,1.2) {};

  \draw[fill = green!20] (x) circle (0.8);

  \node[circle, fill=black, inner sep=1.5pt]  at (2,3) {};
  \node at (2,3.2)   {$(x^*,\lambda^*,\proxy^*,\piproxy^*)$};
  \node (v1) at (5,1.5) {};
  \node (v2) at (-1,2) {};

  \node[circle, fill=black, inner sep=1.5pt]   at (2.71,2.64) {};
  \node[circle, fill=black, inner sep=1.5pt]   at (1.241, 2.747) {};
  \coordinate (s3) at (2.71,2.64);
  \coordinate (s4) at (1.241, 2.747);

  \draw[thick,->] (x) to node[below left] {$\extray_2$} (v1);
  \draw[thick,->] (x) to node[below right] {$\extray_1$} (v2);

  \draw[red,thick,dashed] ($ (s3) + 2*(s3) - 2*(s4) $) -- node[above
  right, pos=0.15] {IC} ($ (s4) + 2*(s4) - 2*(s3) $);

\end{tikzpicture}

%%% Local Variables:
%%% mode: latex
%%% TeX-master: "../approximate-gne-via-bnc-preprint"
%%% End:
  \caption{Sketch of an intersection cut in the context of
    $(\mappro,\addappro)$-NE in mixed-integer GNEPs}
  \label{fig:Intro}
\end{figure}
We illustrate our contribution and the main challenges using
the schematic sketch in Figure~\ref{fig:Intro}.
The way we re-state the problem of finding an
$(\mappro,\addappro)$-NE, which is a highly nonlinear problem, as an
optimization problem has the key feature that the node problems in our
B\&C framework are linear programs (with the orange area in the figure
representing the feasible set).
This allows to use the corner polyhedron at the
solution~$(x^*,\lambda^*,\proxy^*,\piproxy^*)$ of the node that we need to cut
because of two reasons.
First, it contains the set of \eqtuple[s] contained in the respective
subtree of the B\&C tree (the blue dots) and, second, its extreme rays
($r_1$ and $r_2$ in the figure) intersect the boundary of another
convex set (green circle in the figure), where the latter needs to be
free of any such \eqtuple[s] and needs to contain the point to cut in
its interior.
A core result (see Section~\ref{sec:cuts-gne}) regarding this setup is
that we prove the existence of suitable ICs under the following
assumptions:
The constraints are linear and the players' cost functions are either
convex in the entire strategy profile, or, concave
in the entire strategy profile as well as linear in the rivals' strategies.

Moreover, for the special case of standard NEPs, we consider in
\Cref{sec:cuts-ne} a different type of cut, which is not based on ICs
and does not require any assumptions on the cost or constraint functions.
For this (possibly nonlinear) cut, we prove finite termination of our
algorithm under the assumption that the set of best-response sets for
each player is finite.
This holds true in particular for the important special cases of
(i) players' cost functions being concave in their own continuous
strategies and (ii) the players' cost function only depending on their
own strategy and the rivals integer strategy components.

%%%%%%%%%%%%%%%%%%%%%%%%%%%%%%%%%%%%%%%%%%%%%%%%%%%%%%%%%%%%%%%%%%%%%%%%%%%
\vspace*{0.5em}
\paragraph{\textbf{Computing Best-Approximate Equilibria}}
\label{sec:new-cuts}

We further provide in \Cref{sec:best-approximate} a method based on
our B\&C algorithm that computes (up to a given tolerance) the minimal
multiplicative value $\mappro$.
Here, we restrict ourselves to the case of $\addappro_i = 0$ and to
the special situation that every player has the same multiplicative
approximation guarantee.
This algorithm is based on a binary search over the value
$\mappro$.
Remarkably, we can enhance this basic procedure substantially by
reusing information gained during the execution of the B\&C in former
binary search iterations.
More precisely, whenever an approximate equilibrium \wrt $\mappro$
is found and the value of $\mappro$ is decreased to
$\mappro' < \mappro$, we show that we can reuse the entire tree
structure (including the derived cuts) of the B\&C method for
tackling~$\mappro'$ as previously pruned nodes that
were cut off do not contain approximate equilibria \wrt $\mappro$ and
thus, in particular, not for any smaller value.
Exploiting this, we can turn the basic binary search, which is a
multi-tree method, into a more effective single-tree algorithm.

%%%%%%%%%%%%%%%%%%%%%%%%%%%%%%%%%%%%%%%%%%%%%%%%%%%%%%%%%%%%%%%%%%%%%%%%%%%
\vspace*{0.5em}
\paragraph{\textbf{Numerical Case Study}}
\label{sec:new-cuts}

Finally, we implemented our B\&C method and the above procedure to
compute best-approximate NEs.
Preliminary results for NEPs arising in mixed-integer flow games are
presented in \Cref{sec:numerical-results}.

%%%%%%%%%%%%%%%%%%%%%%%%%%%%%%%%%%%%%%%%%%%%%%%%%%%%%%%%%%%%%%%%%%%%%%%%%%%
\subsection{Related Work}

Over the recent decades, there has been growing interest in the
understanding and computation of equilibria in nonconvex games.
For the particularly challenging class of nonconvex GNEPs,
however, only a handful of studies exist to date. Sagratella
was the first to tackle this problem in the context of
Cournot oligopoly models with mixed-integer
quantities \parencite{sagratella2017computing} and generalized
mixed-integer potential games \parencite{Sagratella2017AlgorithmsFG},
demonstrating that a best-response algorithm converges in
finitely many steps to an additive $\varepsilon$-approximate
equilibrium, i.e., a $(0,(\varepsilon,\ldots,\varepsilon))$-NE,
for any given approximation value $\varepsilon>0$.
For similar approaches based on best-response methods
for mixed-integer GNEPs; see \textcite{Fabiani_Grammatico:2020}
and \textcite{Fabiani_et_al:2022}.
In \textcite{Sagratella19}, the author considers mixed-integer
GNEPs with linear coupling constraints and proposes a branch-and-bound
(B\&B) method under the strong assumption of an existing and
computationally  tractable merit function.
Moreover, a branch-and-prune (B\&P) method is developed that exploits
the idea of dominance of strategies for pruning.
In this regard, let us remark that the pruning steps also employ
``cuts''.
However, rather than tightening a relaxation as in our paper,
these cuts are used to prune branches of the B\&P tree.
\textcite{Harks24} tackle nonconvex GNEPs via
a convexification technique, associating to every GNEP a
corresponding set of convexified instances with the same set
of originally feasible equilibria. They then introduce the class
of quasi-linear GNEPs and show how their convexification approach can be
used to reformulate the original GNEP as a standard (nonlinear)
optimization problem and provide a numerical study for this class.
Their general approach
is limited in the sense that it relies on deriving a convexification
which itself is known to be computationally difficult.
\textcite{DSHS25}, and also we in this work, circumvent
this problem by offering a  direct computational approach, which is
the first B\&C framework for the computation of exact (pure) equilibria for
general nonconvex GNEPs.
They use the Nikaido--Isoda~(NI) function to reformulate the GNEP and
use ideas from bilevel optimization to set up their B\&C framework.
In contrast to our approach, their reformulation cannot handle
$(\mappro,\addappro)$-NEs due to the aggregative nature of the NI
function used for their node problems.
Moreover, similar to the framework presented here, their approach
relies on the existence of suitable cuts and, in particular, on the
existence of ICs, which they can only guarantee under the
restrictive assumption of social costs being concave.
We do not require such an assumption by the specific choice of our
node problem formulation.

For the simpler case of standard NEPs, so-called
integer programming games (IPGs) are introduced in~\textcite{Koeppe11}
and have been the subject of extensive research;
see, e.g., \textcite{cronert2022equilibrium,KleerS17,PiaFM17} and
\textcite{carvalho2023integer} for a survey.
A large part of this literature focuses on mixed NEs to circumvent the
difficulties arising from the nonconvexity of IPGs.
In contrast, we focus on pure NEs but note that mixed NEs correspond
to pure NEs of the mixed extension of a game and can therefore also be
handled within our framework.
In the following, we focus on works that compute pure equilibria leveraging
mixed-integer programming techniques.
\textcite{dragotto2023zero} address the computation, enumeration, and
selection of Nash equilibria in IPGs with purely integral strategy
spaces using a cutting-plane algorithm. In contrast to our
B\&C framework, their method solves integer programs at
intermediate steps and does not involve branching on fractional
points.
\textcite{kirst2022branch} propose a B\&B algorithm for
computing the set of all $\varepsilon$-additive approximate equilibria
within a specified error tolerance for IPGs with box-constraints.
By exploiting this special structure, their approach relies on rules
that identify and eliminate regions of the feasible space that cannot
contain any such equilibrium.
Similar pruning arguments, exploiting the dominance of strategies
based on the derivative of the cost function,
were used for standard NEPs with convex constraints in the pure
integer case. Here, \textcite{Sagratella16} proposes a
branching method to compute the entire set of NE, which was
further enhanced by \textcite{schwarze2023branch} via a pruning
procedure.

Note that some of the above
papers~\parencite{dragotto2023zero,kirst2022branch,Sagratella2017AlgorithmsFG}
also consider additive $\varepsilon$-approximate equilibria.
Outside the context of general formulations of mixed-integer (G)NEPs,
approximate equilibria have been studied in various settings.
We refer to~\textcite{Deligkas20continuous} for the case of continuous
games.
Recently, approximate mixed equilibria were studied in
\textcite{Duguet25MNE} for nonconvex cost functions.
In the special case of integer weighted congestion games,
generalizations of the notion of potential games led to approximation
results for the value such that a multiplicative approximate NE exist
\parencite{Caragiannis15,Hansknecht14}.
For market equilibria, other approximation concepts
have been derived. For approximate solution concepts relaxing the
strategy spaces, e.g., the market-clearing condition; see
\textcite{Budish11Shapley,Deligkas24Inapproximability,%
	Guruswami05envyfree,Vazirani11Market}.
Here, the Shapley--Folkman theorem~\parencite{Starr12Book}
constitutes an important tool to prove existence of approximate
equilibria; see, e.g., \textcite{Liu23Shapley} for sum-aggregative games.
Finally, multiplicative notions of approximate market equilibria have
been considered in~\textcite{Codenotti05Market,Garg25Welfare} for
special (concave) cost functions and divisible goods.

%%% Local Variables:
%%% mode: latex
%%% TeX-master: "approximate-gne-via-bnc-preprint"
%%% End:

\section{Problem Statement}
\label{sec:problem-statement}

We consider a non-cooperative and complete-information game~$G$
with players indexed by the set~$N = \set{1, \dotsc,n}$.
Each player~$i \in N$ solves the optimization problem
\begin{equation}
  \label{opt: player}
  \tag{$\mathcal{P}_i(x_{-i})$}
  \begin{split}
    \min_{x_i} \quad & \pi_i(x_i,x_{-i}) \\
    \st  \quad & x_i \in X_i(x_{-i}) \subseteq  \Z^{k_i} \times \R^{l_i},
  \end{split}
\end{equation}
where $x_i$ is the strategy of player~$i$ and $x_{-i}$ denotes the
vector of strategies of all players except player~$i$.
The function $\pi_i:  \prod_{j \in N}  \R^{k_j + l_j} \to \R$ denotes
the cost function of player $i$.
The strategy set $X_i(x_{-i})$ of player~$i$ depends on the rivals'
strategies $x_{-i}$ and is a subset of $\Z^{k_i} \times \R^{l_i}$ for
$k_i,l_i \in \nonNegInts$, i.e., the first $k_i$ strategy components
are integral and the remaining~$l_i$ are continuous variables.
We assume that the strategy sets are of the form
\begin{align*}
  X_i(x_{-i}) \define\Defset{x_i \in \Z^{k_i} \times
  \R^{l_i}}{g_i(x_i,x_{-i}) \leq 0}
\end{align*}
for a function $g_i: \prod_{j \in N}  \R^{k_j+ l_j} \to \R^{m_i}$ and
$m_i \in \nonNegInts$.
We denote by $X(x) \define \prod_{i \in N} X_i(x_{-i})$ the product set
of feasible strategies \wrt $x$ and by
\begin{align*}
  \FeasStrats\define\Defset{x \in \prod_{i \in N} \Z^{k_i} \times
  \R^{l_i}}{x \in X(x)} = \Defset{x \in \prod_{i \in N} \Z^{k_i}
  \times  \R^{l_i}}{g(x) \leq 0}
\end{align*}
the set of feasible strategy profiles, where we abbreviate $g(x)
\define (g_i(x))_{i \in N}$.
We also use its continuous relaxation defined by
\begin{align*}
  \FeasStratsRel\define \Defset{ x \in \prod_{i \in N}\R^{k_i +
  l_i}}{g(x) \leq 0}.
\end{align*}

In order to guarantee that \eqref{opt: player} and our B\&C node
problems (discussed below) admit an optimal solution (if feasible),
we make the following standing assumption.
\begin{assumption}\label{ass:GeneralAss} \hfill
  \begin{enumerate}[label=(\roman*)]
  \item $\FeasStrats$ is non-empty and
    $\FeasStratsRel$  is compact. \label[thmpart]{ass:GeneralAss:Poly}
  \item For every player $i$, her cost function is bounded
    and lower semi-continuous on $\FeasStratsRel$.
  \end{enumerate}
\end{assumption}

We consider now a multiplicative approximation vector
$\mappro=(\mappro_i)_{i\in N} \in [1,\infty)^N$
and an additive approximation vector $\addappro=(\addappro_i)_{i\in N}
\in [0,\infty)^N$.
A  strategy profile~$x^* \in \FeasStrats$ is called an
$(\mappro,\addappro)$-Nash equilibrium
($(\mappro,\addappro)$-NE) if
\begin{equation*}
  % \label{opt: HatV}
  \pi_i(x_i^*,x_{-i}^*) \leq \mappro_i \VBR_i(x_{-i}^*) + \addappro_i
  \quad  \text{ for all } i \in N
\end{equation*}
holds, where
\begin{equation*}
  \VBR_i(x_{-i})
  \define \min_{y_i \in X_i(x_{-i})} \ \pi_i(y_i, x_{-i})
\end{equation*}
denotes the best-response value for player $i$ \wrt $x_{-i}$.
We further denote by $\VBR(x) \define (\VBR_i(x_{-i}))_{i\in N}$  the
vector of best-response values and call a vector $(y^*_i)_{i\in N}$
with $y^*_i \in \argmin \Defset{ \pi_i(y_i,x_{-i}^*) }{y_i\in
  X_i(x_{-i})}$ a best-response vector to $x$.

We use the following notation throughout the paper.
We denote the set of all $(\mappro,\addappro)$-NE by $\Nesapp
\subset \FeasStrats$.
By $\FeasStrats[i]\define \defset{x_i}{\exists x_{-i} \text{ with }
	(x_i,x_{-i}) \in \FeasStrats}$ we refer to the  projection
of~$\FeasStrats$ to the strategy space of player~$i$ and define analogously 
$\FeasStrats[-i]\define \defset{x_{-i}}{\exists x_{i} \text{ with }
	(x_i,x_{-i}) \in \FeasStrats}$.

%%% Local Variables:
%%% mode: latex
%%% TeX-master: "approximate-gne-via-bnc-preprint"
%%% End:

\section{The Algorithm}
\label{sec:algorithm}

For a given approximation vectors  $(\mappro,\addappro) \in
[1,\infty)^N\times [0,\infty)^N$, we now
derive a branch-and-cut (B\&C) algorithm to compute an
$(\mappro,\addappro)$-NE or prove that none exists for the GNEP
defined in \Cref{sec:problem-statement}.
Using this algorithm as a subroutine allows the study of
Pareto-minimal approximation values $(\mappro,\addappro)$ such that a
corresponding $(\mappro,\addappro)$-NE exists;
cf.~Section~\ref{sec:best-approximate}.

To this end, we
reformulate the problem to find an
$(\mappro,\addappro)$-NE as the optimization problem
\begin{align}
  \tag{$\mathcal{R}$}
  \label{model:R}
    \min_{\lambda \in \R,x \in \FeasStrats,\proxy\in\R^N,\piproxy\in\R^N}\quad
    &\lambda \\
    \st  \quad
    & \lambda \geq \frac{\piproxy_i}{\mappro_i} -\proxy_i
    -\frac{\addappro_i}{\mappro_i}
    \quad \text{ for all }i\in N, \label{eq: LambdaBound}\\
    &\proxy\leq \VBR(x) \label{eq: ProxyBound}\\
    &\piproxy \geq \pi(x).  \label{eq: PiProxyBound}
\end{align}
Here, $\proxy,\piproxy\in \R^N$  can be seen as proxy variables that
approximate the best-response values and costs at a strategy profile $x$, respectively.

\begin{lemma}
  A strategy profile~$x \in \Rall$
  is an $(\mappro,\addappro)$-NE if and only if
  there exists a feasible point~$(x,\lambda,\proxy,\piproxy)$ for \eqref{model:R}
  with $\lambda \leq 0$.
  In particular, there exists an $(\mappro,\addappro)$-NE if and only if the optimal
  value of \eqref{model:R} is smaller or equal to $0$.
\end{lemma}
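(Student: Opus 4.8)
The plan is to establish the first equivalence by unrolling the three constraint families of \eqref{model:R}, and then to obtain the existence statement as a corollary together with the attainment of the minimum guaranteed by \Cref{ass:GeneralAss}.

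For the direction from \eqref{model:R} to an equilibrium, I would take a feasible point $(x,\lambda,\proxy,\piproxy)$ with $\lambda \leq 0$ and chain the inequalities. From \eqref{eq: LambdaBound} together with $\lambda \leq 0$ one gets $\piproxy_i/\mappro_i - \proxy_i - \addappro_i/\mappro_i \leq 0$ for every $i \in N$; multiplying by $\mappro_i \geq 1 > 0$ yields $\piproxy_i \leq \mappro_i \proxy_i + \addappro_i$. Inserting the bound $\pi_i(x) \leq \piproxy_i$ from \eqref{eq: PiProxyBound} on the left-hand side and $\proxy_i \leq \VBR_i(x_{-i})$ from \eqref{eq: ProxyBound} on the right-hand side gives $\pi_i(x) \leq \mappro_i \VBR_i(x_{-i}) + \addappro_i$ for all $i$. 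Since feasibility of \eqref{model:R} forces $x \in \FeasStrats$, this is precisely the defining condition of an $(\mappro,\addappro)$-NE.

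For the converse, I would exhibit an explicit feasible point from a given $(\mappro,\addappro)$-NE $x$. The key observation is that the proxy variables should be chosen as tight as the constraints allow, namely $\proxy_i \define \VBR_i(x_{-i})$ and $\piproxy_i \define \pi_i(x)$, so that \eqref{eq: ProxyBound} and \eqref{eq: PiProxyBound} hold with equality; one then sets $\lambda \define \max_{i \in N}\bigl(\pi_i(x)/\mappro_i - \VBR_i(x_{-i}) - \addappro_i/\mappro_i\bigr)$, the smallest value compatible with \eqref{eq: LambdaBound}. Each term of this maximum is non-positive exactly because the equilibrium inequality $\pi_i(x) \leq \mappro_i \VBR_i(x_{-i}) + \addappro_i$ rearranges to $\pi_i(x)/\mappro_i - \VBR_i(x_{-i}) - \addappro_i/\mappro_i \leq 0$; hence $\lambda \leq 0$ and $(x,\lambda,\proxy,\piproxy)$ is the required feasible point.

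It remains to deduce the existence statement. The same tightening argument shows that for fixed $x \in \FeasStrats$ the minimal feasible $\lambda$ equals $\max_{i \in N}\bigl(\pi_i(x)/\mappro_i - \VBR_i(x_{-i}) - \addappro_i/\mappro_i\bigr)$, so that the optimal value of \eqref{model:R} is the minimum of this expression over $x \in \FeasStrats$. By \Cref{ass:GeneralAss} this minimum is attained, whence an optimal value of at most $0$ is equivalent to the existence of a feasible point with $\lambda \leq 0$; combined with the first equivalence this proves the claim. I expect the only delicate point to be this attainment: without it, an infimum equal to $0$ that is not achieved would not directly yield a feasible point with $\lambda \leq 0$, which is exactly where the compactness of $\FeasStratsRel$ and the lower semi-continuity of the cost functions—ensuring that both the best-response values and the outer minimization are attained—enter the argument.
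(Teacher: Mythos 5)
Your proof of the first equivalence is correct and is essentially the paper's proof: the if-direction chains \eqref{eq: LambdaBound}, \eqref{eq: ProxyBound}, and \eqref{eq: PiProxyBound} exactly as the paper does, and for the converse the paper simply takes $\lambda = 0$ where you take the maximal regret --- an immaterial difference. The genuine problem lies in your treatment of the second sentence. Your reduction of \eqref{model:R} to minimizing $\rho(x) \define \max_{i\in N}\bigl(\pi_i(x)/\mappro_i - \VBR_i(x_{-i}) - \addappro_i/\mappro_i\bigr)$ over $\FeasStrats$ is fine, but the assertion that this minimum is attained ``by \Cref{ass:GeneralAss}'' does not follow. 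Weierstrass requires $\rho$ to be lower semi-continuous on the compact set $\FeasStrats$, and $\rho$ contains the terms $-\VBR_i(x_{-i})$ with the \emph{wrong} semicontinuity: $\VBR_i(x_{-i}) = \min_{y_i} \pi_i(y_i,x_{-i})$ is a minimum of lower semi-continuous functions over compact sets and is therefore itself only lower semi-continuous in general; its upper semicontinuity would require upper semicontinuity of $\pi_i$ in the rivals' variables (and, in the GNEP case, inner semicontinuity of the map $x_{-i}\mapsto X_i(x_{-i})$), neither of which \Cref{ass:GeneralAss} provides. Hence $\rho$ is a sum of lower and upper semi-continuous functions and need not be lower semi-continuous, so the Weierstrass argument you invoke is not available.

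This is not a removable technicality; attainment can genuinely fail under \Cref{ass:GeneralAss}. Consider a two-player NEP with $X_1 = X_2 = [0,1]$, $\pi_1(x_1,x_2) = \abs{x_1-1}$ if $x_2 < 1/2$ and $\pi_1(x_1,x_2) = x_1 - 1$ if $x_2 \geq 1/2$ (bounded and lower semi-continuous), and $\pi_2(x_1,x_2) = \bigl(x_2 - (1+x_1)/4\bigr)^2$. Then $\VBR_1(x_2) = 0$ for $x_2 < 1/2$ and $\VBR_1(x_2)=-1$ otherwise, the best-response graphs do not intersect (so no exact NE exists), yet along the points $(1,\, 1/2-\delta)$ both regrets are $0$ and $\delta^2$; thus for $\mappro=(1,1)$, $\addappro=(0,0)$ the infimum of \eqref{model:R} equals $0$ but is not attained, and no feasible point has $\lambda \leq 0$. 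Note that the paper itself never argues attainment: its proof establishes only the first equivalence and treats the second sentence as a direct restatement of it, namely that some feasible point has non-positive objective value. Your instinct that attainment is the one delicate point is exactly right, but the fix you propose does not work, and under your reading of ``optimal value'' as the infimum the second sentence would require strictly more than \Cref{ass:GeneralAss} gives.
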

\begin{proof}
  We start by showing the if-direction, i.e., let $(x,\lambda,\proxy,\piproxy)$ be
  feasible for~\eqref{model:R} with $\lambda \leq 0$. Then
   $x \in \FeasStrats$ is a feasible strategy profile
  and  for all $i\in N$, we get
  \begin{align*}
    0 \geq \lambda \overset{\eqref{eq: LambdaBound}}{\geq} \frac{\piproxy_i}{\mappro_i}
      -\proxy_i-\frac{\addappro_i}{\mappro_i}
      \overset{\eqref{eq: ProxyBound},\eqref{eq: PiProxyBound}}{\geq}
      \frac{\pi_i(x)}{\mappro_i} -\VBR_i(x_{-i})-\frac{\addappro_i}{\mappro_i}
    \implies \pi_i(x) \leq \mappro_i \VBR_i(x_{-i}) + \addappro_i.
  \end{align*}
  Hence, $x$ is an $(\mappro,\addappro)$-NE.

  Now, we prove the only-if-direction, i.e., let $x$ be a
  $(\mappro,\addappro)$-NE.
  Then, $x \in \FeasStrats$ and for all $i\in N$, we have
  \begin{align*}
    \pi_i(x) \leq \mappro_i \VBR_i(x_{-i}) + \addappro_i \implies 0 \geq
    \frac{\pi_i(x)}{\mappro_i} -\VBR_i(x_{-i}) -\frac{\addappro_i}{\mappro_i} ,
  \end{align*}
  showing that $(x,0,\VBR(x),\pi(x))$ is feasible for \eqref{model:R}.
\end{proof}

Consequently, we are looking for a global minimizer~$(x,\lambda,\proxy,\piproxy)$ of
\eqref{model:R}.
To this end, we draw inspiration from bilevel optimization and the
so-called continuous high-point relaxation (C-HPR).
That is, we relax~\eqref{model:R} by relaxing the
integrality constraints of the strategy profiles using
$\FeasStratsRel$ instead of $\FeasStrats$.
Moreover, we relax both of the inequality constraints \eqref{eq:
  ProxyBound} and \eqref{eq: PiProxyBound} to obtain
\begin{align}
  \tag{C-HPR}
  \label{model:CpiHPR}
  \min_{x \in \FeasStratsRel, \lambda \in \R,\proxy \in \R^N,\piproxy \in \R^N} \quad
  & \lambda& \\
  \st \quad
  &\lambda \geq \frac{\piproxy_i}{\mappro_i}-\proxy_i -
    \frac{\addappro_i}{\mappro_i}  &\text{ for
                                     all }i\in N,\nonumber\\
  &\proxy \leq \proxy^+, \quad \piproxy\geq \piproxy^-, \nonumber
\end{align}
where we use $\proxy^+ \in \R^N$ ($\piproxy^- \in \R^N$) as a finite upper-bound (lower-bound)
vector for $\proxy$ ($\piproxy$) to ensure boundedness of the problem.
For them to be valid, we require that $\proxy^+_i \geq \VBR(x)$ for all $x \in \FeasStrats$
and $\piproxy^+_i \leq \pi_i(x)$ for all $x \in \FeasStrats$.
Examples for such bounds are
\begin{equation*}
  \proxy_i^+ = \sup \Defset { \pi_i(x) }{x \in
    \FeasStrats[i]\times\FeasStrats[-i] }
  \quad \text{and} \quad
  \piproxy_i^- = \inf \Defset { \pi_i(x) }{x \in \FeasStrats}.
\end{equation*}
Note that these values are finite by the assumption that $\pi_i$ is
bounded.

We aim to embed \eqref{model:CpiHPR} in a B\&C method.
Note that  in a B\&C tree, each node problem is given by the root-node problem
\eqref{model:CpiHPR} together with additional constraints.
These constraints correspond to the branching decisions and cuts
added along the path from the root node to node~$t$.
We denote them by $B_t$ and $C_t$, respectively.
Hence, the problem at node~$t$ can be formulated as
\begin{equation}
  \tag{$\mathcal{R}_t$}
  \label{model:node_t}
  \begin{split}
    \min_{x,\lambda,\proxy,\piproxy} \quad
    & \lambda
    \\
    \st \quad & \lambda \geq \frac{\piproxy_i}{\mappro_i} - \proxy_i
    -\frac{\addappro_i}{\mappro_i}
    \quad \text{ for all }i\in N,\\
    & \proxy \leq \proxy^+, \quad  \piproxy \geq \piproxy^-,
    \\
    & (x,\lambda,\proxy,\piproxy) \in (\FeasStratsRel \times \R\times
    \R^N\times \R^N)
    \cap B_t \cap C_t.
  \end{split}
\end{equation}
Let us denote by $\FeasRt \subseteq \R^{\mydim}$ the set of feasible
solutions to the above problem~\eqref{model:node_t} with $\mydim
\define \sum_{i \in N} (k_i + l_i)+1+ 2N $.

We discuss in the following the required cuts that enhance the
approximation of~$\VBR(x)$ and $\pi(x)$ by $\proxy$ and $\piproxy$,
respectively. We start with the following definition.

\begin{definition}
  \label{def:non-ne-cut}
  For any node~$t$ of the search tree, let $(x^*,\lambda^*,
  \proxy^*,\piproxy^*)\in \FeasStrats\times \R\times
  \R^N\times \R^N$ be an integer-feasible
  node solution, i.e., an integer-feasible solution to
  \eqref{model:node_t}.
  Consider an arbitrary best-response vector $y^*$ \wrt $x^*$.
  Then, we call an inequality
  $c(x,\lambda,\proxy,\piproxy;
  x^*,\lambda^*,\proxy^*,\piproxy^*,y^*,\mappro,\addappro) \leq 0$,
  which is parameterized by $(x^*,\lambda^*, \proxy^*,\piproxy^*,
  y^*,\mappro,\addappro)$, an
  approximate-Nash-equilibrium cut (\cut[)] for node~$t$ if the
  following two properties are satisfied:
  \begin{enumerate}[label=(\roman*)]
  \item\label{def:non-ne-cut:1} It is satisfied by all points
    $(x, 0,\VBR(x),\pi(x)) \in \Nespiapp \cap B_t \cap C_t$.
  \item\label{def:non-ne-cut:2} It is violated by
    $(x^*,\lambda^*,\proxy^*,\piproxy^*)$.
  \end{enumerate}
  Here, we abbreviate for the set of \eqtuple[s] via
  \begin{align*}
    \Nespiapp\define \Defset{(x,0, \VBR(x),\pi(x))
    \in \FeasStrats \times \R \times \R^N\times \R^N}{x \in\Nesapp}.
  \end{align*}
\end{definition}

In addition, \acut is said to be globally valid if it is
satisfied by all points $(x,0,\VBR(x),\pi(x)) \in \Nespiapp$.
Such a cut is then valid for any node~$t$ of the B\&C search tree.

The B\&C method now works as follows.
Starting at the root node, we solve the current node problem
\eqref{model:node_t}. In case that the problem is infeasible or the
optimal objective is larger than $0$,
there does not exist an $(\mappro,\addappro)$-NE in this node and we
prune it.
Otherwise, we check if the optimal node solution is integer-feasible
and create new nodes as usual by branching on fractional integer
variables if necessary. Once we obtain an integer-feasible node solution
$(x^*,\lambda^*,\proxy^*,\piproxy^*)\in \FeasStrats\times \R\times
\R^N\times \R^N$, we check if~$x^*$ is actually an
$(\mappro,\addappro)$-NE. If so, we stop and return the
$(\mappro,\addappro)$-NE.
Otherwise, we use \acut to cut off the integer-feasible
point $(x^*,\lambda^*,\proxy^*,\piproxy^*)$ without removing any point
$(x,0,\proxy,\piproxy)\in \Nespiapp$ with $x$ being an
$(\mappro,\addappro)$-NE.
The procedure to process a node $t$ is described formally in
\Cref{algorithm:BC}.

\begin{algorithm}
  \caption{Processing Node~$t$}
  \label{algorithm:BC}
  \begin{algorithmic}[1]
    \State Solve \eqref{model:node_t}. \label{line:step_1}
    \If{\eqref{model:node_t} is infeasible \textbf{or} the optimal
      objective is strictly positive} \label{line: Infeasible}
    \State Prune the node.
    \Else
    \State Let $(x^*,\lambda^*,\proxy^*,\piproxy^*)$ be a solution to
    \eqref{model:node_t}. \label{line: NodeSol}
    \If{$x^* \notin \FeasStrats$} \label{line: xNotInX}
    \State Create two child nodes by branching on a fractional variable.
    \Else
    \State Determine $\VBR(x^*)$ and  obtain a
    solution~$y^*$. \label{line: SolveBR}
    \If{$\pi_i(x^*) \leq \mappro_i \VBR_i(x^*_{-i}) + \addappro_i $ for all $i\in
      N$} \label{line: NEFound}
    \Comment{$x^*$ is a $(\mappro,\addappro)$-NE}
    \State Return $x^*$ and stop the overall B\&C method.\label{line:
      return-x-star}
    \Else \Comment{$\proxy^* \nleq \VBR(x^*)$ or $\piproxy \ngeq \pi(x^*)$}
    \label{line: proxy>BR}
    \State Augment $C_t$ with a \cut[.]
    %$c(x,\lambda,\proxy;x^*,\lambda^*,\proxy^*,y^*)\leq 0$.
    \label{line: Cut}
    \State Go to Step~\ref{line:step_1}.
    \EndIf
    \EndIf
    \EndIf
  \end{algorithmic}
\end{algorithm}%
%
%
%%% Local Variables:
%%% mode: latex
%%% TeX-master: "branch-and-cut-for-ipgs-preprint"
%%% End:

Note that as long as the introduced cuts result in closed sets~$C_t$,
the solution in Line~\ref{line: NodeSol} always exists as $\pi_i$
is assumed to be lower semi-continuous on $\FeasStratsRel$ for all $ i
\in N$ and the feasible set $\FeasRt$ of \eqref{model:node_t} is
compact by~$C_t$ and~$B_t$ being closed and $\FeasStratsRel$ being
compact.

In the remainder of this section, we prove the correctness of the B\&C
method, i.e., we show that if the method terminates, it yields an
$(\mappro,\addappro)$-NE $x^* \in \Nesapp$ or a certificate for the
non-existence of $(\mappro,\addappro)$-NEs.
It is clear that if a strategy profile~$x^*$ is returned by the
algorithm, then this strategy profile is an $(\mappro,\addappro)$-NE
since the condition in Line~\ref{line: NEFound} is fulfilled in this
case. Hence, the correctness follows from the next
\namecref{thm:correctness_algBC}.
\begin{theorem}
  \label{thm:correctness_algBC}
  If the B\&C algorithm terminates without finding an $(\mappro,\addappro)$-NE,
  then there does not exist an $(\mappro,\addappro)$-NE.
\end{theorem}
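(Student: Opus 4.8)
The plan is to argue by contradiction: I assume the algorithm halts without ever reaching Line~\ref{line: return-x-star}, yet an $(\mappro,\addappro)$-NE~$x^*$ exists, and I derive a contradiction. The whole argument rests on an \emph{invariant} stating that no \eqtuple is ever discarded: for every node~$t$ created during the run, the feasible set~$\FeasRt$ of \eqref{model:node_t} contains every tuple $(x,0,\VBR(x),\pi(x)) \in \Nespiapp \cap B_t$, i.e. every \eqtuple whose strategy profile obeys the branching constraints~$B_t$ accumulated along the path to~$t$. Granting this, the two pruning rules in Line~\ref{line: Infeasible} cannot have eliminated the node that houses~$x^*$.

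First I would establish the invariant by induction along the evolution of the B\&C tree, tracking the two operations that alter a node's feasible set. At the root we have $B_t = C_t = \emptyset$ and the node problem is \eqref{model:CpiHPR}; for any \eqtuple $(x,0,\VBR(x),\pi(x))$ the defining inequality $\pi_i(x)\le\mappro_i\VBR_i(x_{-i})+\addappro_i$ of an $(\mappro,\addappro)$-NE is exactly \eqref{eq: LambdaBound} with $\lambda=0$, while $\VBR_i(x_{-i})\le\proxy_i^+$ and $\pi_i(x)\ge\piproxy_i^-$ hold by validity of the bounds~$\proxy^+,\piproxy^-$, and $x\in\FeasStratsRel$ since $x\in\FeasStrats$; thus the tuple is feasible and the base case holds. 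For the branching step, every $x \in \Nesapp$ has integral integer-components, so it satisfies exactly one of the two branching inequalities imposed on a fractional variable; hence $\Nespiapp \cap B_t$ distributes across the two children precisely as $\FeasRt$ does, and the inclusion is inherited by each child. For a cut added in Line~\ref{line: Cut}, the induction hypothesis already places each relevant tuple $(x^*,0,\VBR(x^*),\pi(x^*))$ with $x^*\in\Nesapp$ satisfying $B_t$ inside $\FeasRt$, so these tuples lie in $\Nespiapp \cap B_t \cap C_t$; property~\ref{def:non-ne-cut:1} of \Cref{def:non-ne-cut} then guarantees that the new \cut does not remove them, and re-solving in Line~\ref{line:step_1} leaves the inclusion intact.

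Next I would turn the invariant against the pruning rules. Under the contradiction hypothesis, $(x^*,0,\VBR(x^*),\pi(x^*)) \in \Nespiapp$; starting at the root and always descending to the child whose branching inequality $x^*$ satisfies produces a unique leaf~$t^*$ with $x^*$ obeying $B_{t^*}$. By the invariant applied to the final state of~$t^*$, this tuple lies in $F_{t^*}$, which is therefore non-empty; and since the tuple attains the objective value $\lambda=0$, the optimal value of \eqref{model:node_t} at~$t^*$ is at most~$0$. But terminating without returning a profile means the tree is fully explored and every leaf was pruned through Line~\ref{line: Infeasible}, so $t^*$ was pruned either for infeasibility---contradicting $F_{t^*}\neq\emptyset$---or for a strictly positive optimum---contradicting the bound~$\le 0$. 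This contradiction shows that no $(\mappro,\addappro)$-NE exists.

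I expect the main obstacle to be the bookkeeping in the cut step of the induction: one must check that the tuples to be preserved genuinely belong to $\Nespiapp \cap B_t \cap C_t$---the exact set on which property~\ref{def:non-ne-cut:1} of \Cref{def:non-ne-cut} grants survival---rather than only to $\Nespiapp \cap B_t$, and it is precisely the induction hypothesis that closes this gap, while confirming that the repeated add-cut-and-resolve loop at a single node preserves the invariant until the node is finally branched or pruned. The branching and pruning parts are then routine, given the equivalence in the preceding lemma between $(\mappro,\addappro)$-NEs and feasible points of \eqref{model:R} with $\lambda\le 0$.
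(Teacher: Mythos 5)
Your proof is correct and takes essentially the same approach as the paper's: both rest on the invariant that tuples in $\Nespiapp$ are never removed by branching, by cuts (via property~\ref{def:non-ne-cut:1} of \Cref{def:non-ne-cut}), or by pruning, and both conclude by observing that a pruned leaf---being infeasible or having strictly positive optimal value---cannot contain a tuple with $\lambda = 0$. The differences are only cosmetic: you argue by contradiction with a per-node invariant ($\Nespiapp \cap B_t \subseteq \FeasRt$, proved by an explicit induction over root, branching, and cut steps), whereas the paper argues directly with the aggregated invariant $\Nespiapp \subseteq \bigcup_{t \text{ leaf}} \FeasRt$.
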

\begin{proof}
  We first make the following observation.
  In the root node, the feasible set contains the set $\Nespiapp$ as
  for every $(x,0,\VBR(x),\pi(x)) \in \Nespiapp$, we have
  $(x,0,\VBR(x),\pi(x)) \in \FeasStratsRel \times \R \times \R^N$ and
  \begin{align*}
    \pi_i(x) \leq \mappro_i \VBR_i(x)  - {\addappro_i} \implies 0 \geq
    \frac{\pi_i(x)}{\mappro_i}- \VBR_i(x_{-i}) -\frac{\addappro_i}{\mappro_i}.
  \end{align*}
  Hence, due to Condition~\ref{def:non-ne-cut:1} in
  \Cref{def:non-ne-cut}, the following invariant is true throughout
  the execution of the B\&C algorithm:
  The set $\Nespiapp$ is contained in the union of the feasible
  sets of the problems~\eqref{model:node_t} over all leaf nodes $t$ in
  the B\&C tree, i.e.,
  \begin{equation*}
    \Nespiapp
    \subseteq \bigcup_{t \text{ is a leaf}} \FeasRt.
  \end{equation*}
  Note that  pruned nodes are leafs of the B\&C tree as
  well.

  We argue in the following that $\FeasRt\cap \Nespiapp = \emptyset$
  holds for all leafs $t$ in the case of the B\&C algorithm
  terminating without finding an $(\mappro,\addappro)$-NE.
  It then follows directly by the above invariant that there does not
  exist any $(\mappro,\addappro)$-NE.
  If the B\&C algorithm terminates without finding an $(\mappro,\addappro)$-NE,
  every node~$t$ was ultimately pruned, i.e., the condition in
  Line~\ref{line: Infeasible} was met and Problem~\eqref{model:node_t}
  became either infeasible or had a strictly positive optimal
  objective value.
  In the former case, it is clear that $\FeasRt\cap \Nespiapp =
  \emptyset$ holds because of $\FeasRt = \emptyset$. Hence, consider
  a pruned leaf node $t$ with corresponding optimal objective value
  being strictly positive. Then, any $(x,\lambda, \VBR(x),\pi(x))\in \FeasRt$
  has $\lambda >0$ and, subsequently, $(x,\lambda, \VBR(x),\pi(x)) \notin
  \Nespiapp$ holds.
  Thus, the proof is finished.
\end{proof}

So far, we have shown the correctness of our B\&C method for arbitrary
\cut[s].
In the subsequent section, we derive under suitable conditions
the existence of such \cut[s]
and give sufficient conditions under which they lead to finite
termination of the B\&C method.

%%% Local Variables:
%%% mode: latex
%%% TeX-master: "approximate-gne-via-bnc-preprint"
%%% End:

\section{Cuts and Finite Termination}
\label{sec:cuts-finite-termination}

We now investigate the existence of \cut[s]
and finite termination of our B\&C method. In \Cref{sec:cuts-gne}, we
consider the general case of mixed-integer GNEPs and prove the
existence of \cut[s] via intersection cuts under suitable assumption.
For this, we have to construct \Nefreeset[s], i.e.,
convex sets  containing the optimal integer-feasible node solution
(that has to be cut off) in its interior.
We do so under the assumption of linear constraints and players' cost
functions being convex in the entire strategy profile, or, concave
in the entire strategy profile and linear in the rivals' strategies.
Afterward, in \Cref{sec:cuts-ne}, we consider the special case of
standard NEPs and derive specific (best-response) cuts
tailored to the NEP setting.
Under the assumption of players having finitely many distinct
best-response sets, we prove the finite termination of our B\&C
method for these cuts.
These results for standard NEPs follow in analogy
to the corresponding results by \textcite{DSHS25}.
There, they particularly show that
players have finitely many distinct best-response sets
for the important special cases of
(i) players' cost functions being concave in their own continuous
strategies and (ii) the players' cost function only depending on their
own strategy and the rivals integer strategy components.
\jscom[journal]{Are there conditions for finite termination in the
  GNEP case?}%
\jscom[journal]{Finite termination in case of finitely many different
  cost values in GNEP and NEP?!}%

For both of the above cases, we require the following useful observation
regarding \Cref{algorithm:BC}:
Whenever a cut needs to be added, at least one of the proxy variables
must exhibit slack with respect to its corresponding approximation quantity.
This insight is crucial in order to show the existence of a proper cut, tightening
the respective proxy variable.

\begin{lemma}\label{lem: proxy>BR}
  Suppose that the algorithm enters the else-part in Line~\ref{line:
    proxy>BR}.
  Then, at least one of the following two statements is true:
  \begin{enumerate}[label=(\roman*)]
  \item There exists an $i \in N$ with $\proxy^*_i >
    \VBR_i(x_{-i}^*)$.\label{enum: lem: proxy>BR:1}
  \item There exists an $i \in N$ with $\piproxy^*_i <
    \pi_i(x^*)$. \label{enum: lem: proxy>BR:2}
  \end{enumerate}
\end{lemma}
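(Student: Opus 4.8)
The plan is to argue by contraposition: I would assume that neither alternative holds, i.e.\ that $\proxy^*_i \leq \VBR_i(x_{-i}^*)$ and $\piproxy^*_i \geq \pi_i(x^*)$ for \emph{every} $i \in N$, and then derive that $x^*$ already satisfies the $(\mappro,\addappro)$-NE condition tested in Line~\ref{line: NEFound}. This contradicts the hypothesis that the algorithm reached the else-branch in Line~\ref{line: proxy>BR}, which is only entered after that test has failed.

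The first step is to record the facts available upon entering this else-part. Since the node was not pruned in Line~\ref{line: Infeasible}, the optimal objective value of \eqref{model:node_t} is non-positive, so the returned optimal solution satisfies $\lambda^* \leq 0$. Since the test in Line~\ref{line: xNotInX} was passed, we have $x^* \in \FeasStrats$, so that $\VBR_i(x_{-i}^*)$ and $\pi_i(x^*)$ are well defined. Finally, because the condition in Line~\ref{line: NEFound} failed, there exists at least one player~$j$ with $\pi_j(x^*) > \mappro_j \VBR_j(x_{-j}^*) + \addappro_j$; this is precisely the statement I aim to contradict under the assumption above.

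The core step uses feasibility of $(x^*,\lambda^*,\proxy^*,\piproxy^*)$ for \eqref{model:node_t}, namely constraint~\eqref{eq: LambdaBound}, together with the contradiction hypothesis. For each fixed~$i$, and using $\mappro_i \geq 1 > 0$ so that division and multiplication preserve the sense of inequalities, I would apply the monotone substitutions $\tfrac{\piproxy^*_i}{\mappro_i} \geq \tfrac{\pi_i(x^*)}{\mappro_i}$ and $-\proxy^*_i \geq -\VBR_i(x_{-i}^*)$ and chain them with $0 \geq \lambda^*$ to obtain
\[
  0 \;\geq\; \lambda^* \;\overset{\eqref{eq: LambdaBound}}{\geq}\; \frac{\piproxy^*_i}{\mappro_i} - \proxy^*_i - \frac{\addappro_i}{\mappro_i} \;\geq\; \frac{\pi_i(x^*)}{\mappro_i} - \VBR_i(x_{-i}^*) - \frac{\addappro_i}{\mappro_i}.
\]
Multiplying through by $\mappro_i$ yields $\pi_i(x^*) \leq \mappro_i \VBR_i(x_{-i}^*) + \addappro_i$ for every $i \in N$, i.e.\ the Line~\ref{line: NEFound} condition holds, the desired contradiction. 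Hence the contradiction hypothesis is false and at least one of \ref{enum: lem: proxy>BR:1}, \ref{enum: lem: proxy>BR:2} must hold.

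There is essentially no structural obstacle here: the statement is just the node-local converse of the if-direction of the reformulation lemma, and the only point requiring care is bookkeeping on the directions of the inequalities—specifically that $\lambda^* \leq 0$ is correctly inherited from the non-pruning of the node, and that the positivity $\mappro_i > 0$ is what lets the two proxy bounds be pushed through the $\tfrac{1}{\mappro_i}$ scaling without flipping signs. The one-line inequality chain must be kept as a single display so that the deduction from $\lambda^* \le 0$ down to the best-response bound is not broken.
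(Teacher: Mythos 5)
Your proof is correct and follows essentially the same route as the paper's: both rest on $\lambda^* \leq 0$ from the non-pruning in Line~\ref{line: Infeasible}, the feasibility constraint~\eqref{eq: LambdaBound}, and the failure of the test in Line~\ref{line: NEFound}, combined in the same inequality chain. The only difference is framing---you refute the conjunction of both negations by contradiction, while the paper proves the equivalent implication that $\neg$\ref{enum: lem: proxy>BR:2} forces \ref{enum: lem: proxy>BR:1}---which is a purely cosmetic reorganization.
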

\begin{proof}
  We argue in the following that \ref{enum: lem: proxy>BR:1} has to be
  fulfilled, if \ref{enum: lem: proxy>BR:2} is not satisfied.

  Since the condition in Line~\ref{line: Infeasible} was not met,
  the objective value of $(x^*,\lambda^*,\proxy^*)$ is non-positive,
  i.e., $\lambda^*\leq 0$.
  In particular, by the feasibility of this point, we get  for all $j
  \in N$ that
  \begin{align*}
    0\geq \lambda_j^*\geq \frac{\piproxy_j^*}{\mappro_j}- \proxy_j^* -
    \frac{\addappro_j}{\mappro_j}
    \implies \proxy_j^* \geq \frac{\piproxy_j^*}{\mappro_j}-
    \frac{\addappro_j}{\mappro_j}.
  \end{align*}
  Using that \ref{enum: lem: proxy>BR:2} is not fulfilled in the above
  yields
  \begin{align}\label{eq: proxy>VBR}
    \proxy_j^* \geq \frac{\piproxy_j^*}{\mappro_j}-
    \frac{\addappro_j}{\mappro_j} \geq \frac{\pi_j(x^*)}{\mappro_j}-
    \frac{\addappro_j}{\mappro_j}.
  \end{align}

  Since the condition in Line~\ref{line: NEFound} was not satisfied,
  there exists an $i \in N$ with $\pi_i(x^*)> \mappro_i
  \VBR_i(x_{-i}^*) + \addappro_i$. Using the above inequality
  \eqref{eq: proxy>VBR} for $j=i$ then yields
  \begin{align*}
	\proxy_i^* \geq  \frac{\pi_i(x^*)}{\mappro_i} -
	\frac{\addappro_i}{\mappro_i}  > \frac{\mappro_i\VBR_i(x_{-i}^*) +
		\addappro_i}{\mappro_i}- \frac{\addappro_i}{\mappro_i}
	= \VBR_i(x_{-i}^*)
\end{align*}
and the desired statement follows.
\end{proof}

\subsection{Generalized Nash Equilibrium Problems}
\label{sec:cuts-gne}

With the framework presented here, we are able to guarantee the
existence of ICs under the following  set of assumptions.
\begin{assumption}\label{ass: IC}
  For every player $i\in N$, the following holds true.
  \begin{enumerate}[label=(\roman*)]
  \item The constraint function $g_i$ is linear. \label{ass: IC1}
  \item\label{ass: IC2} One of the following statements holds:
    \begin{enumerate}[label=(\alph{enumii}),ref = (\roman{enumi}\alph*)]
    \item \label{ass: IC2: conv} Her cost function is differentiable
      and convex in the entire strategy profile, i.e.,
      \begin{align*}
        \pi_i : \Rall \to \R, \ x\mapsto \pi_i(x),
      \end{align*}
      is differentiable and convex.
    \item \label{ass: IC2: conc} Her cost function is concave in the
      entire strategy profile and linear in the rivals' strategies,
      i.e.,
      \begin{align*}
        \pi_i : \Rall \to \R, \ x\mapsto \pi_i(x), \text{ is concave,}
      \end{align*}
      and for all $x_{i} \in \R^{l_i+k_i}$,
      \begin{align*}
        \pi_i(x_i,\cdot):\Rallmini \to \R,  \ x_{-i}\mapsto
        \pi_i(x_i,x_{-i}) \text{ is linear.}
      \end{align*}
    \end{enumerate}
  \end{enumerate}
\end{assumption}

\jscom[journal]{We could actually also relax the linearity to
  convexity of $x_{-i}\mapsto g_i(y_i,x_{-i})$ for all $y_i$! We would
  then need to relax also the constraints in the node problems via a
  box constraint for example.}

For the remainder of this section, consider the situation of
\Cref{def:non-ne-cut} and fix the corresponding integer-feasible
solution $(x^*,\lambda^*,\proxy^*,\piproxy^*)$ and a corresponding
best response~$y^*$.
With this at hand, we derive sufficient conditions to define
\acut via an IC.
To this end, we first observe that under \Cref{ass: IC},
the root-node problem is a linear program. In particular,
since ICs are linear, any node problem during the B\&C
remains a linear program if we only use ICs.
In this situation, an IC exists if
there exists a  \emph{\Nefreeset}
$\freeset(x^*,\lambda^*,\proxy^*,\piproxy^*)$, i.e., a convex set
that contains $(x^*,\lambda^*,\proxy^*,\piproxy^*)$ in its interior
but no \eqtuple in $\Nespiapp \cap \FeasRt$.

We will introduce in the following two types of \Nefreeset,
strengthening the approximation of $\VBR(x)$ and
$\pi(x)$ by $\proxy$ and $\piproxy$, respectively.
For the \Nefreeset \wrt $\proxy$, we define for all
$i \in N$ the set
\begin{align*}
  \freeset[\proxy]_i(x^*,y^*)
  \define
  &\Defset{(x,\lambda,\proxy,\piproxy) \in\R^\mydim}{\proxy_i >
    \pi_i(y_i^*,x_{-i}), \, y_i^* \in X_i(x_{-i})}
  \\
  =
  &\Defset{(x,\lambda,\proxy,\piproxy) \in \R^\mydim}{ \proxy_i >
    \pi_i(y_i^*,x_{-i}), \, g_i(y_i^*,x_{-i})\leq 0}.
\end{align*}
For the \Nefreeset \wrt $\piproxy$, we define two different sub-types
of sets, which are convex for convex or concave cost functions,
respectively.
For the convex case, we denote by $\nabla \pi_i(x)$ the gradient of
$\pi_i$ at $x\in \FeasStrats$ for any $i \in N$.
For all $i\in N$, let us define
\begin{align*}
  \freeset[\piproxy,\conv]_{i}(x^*)
  & \define \Defset{(x,\lambda,\proxy,\piproxy) \in
    \R^\mydim}{\piproxy_i < \pi_i(x^*) + \nabla
    \pi_i(x^*)^\top(x-x^*)} \text{ and }
  \\
  \freeset[\piproxy,\conc]_{i}
  & \define \Defset{(x,\lambda,\proxy,\piproxy) \in
    \R^\mydim}{\piproxy_i < \pi_i(x) }.
\end{align*}
For these sets, we get the following convexity statements: 

\begin{lemma}\label{lem: FreeSetConv}
  Under \Cref{ass: IC}, the set $\freeset[\proxy]_i(x^*,y^*)$ is
  convex. Moreover, depending on whether \Cref{ass: IC}\ref{ass: IC2:
    conv} or \Cref{ass: IC}\ref{ass: IC2: conc} holds, the set
  $\freeset[\piproxy,\conv]_{i}(x^*)$ or
  $\freeset[\piproxy,\conc]_{i}$ is convex.
\end{lemma}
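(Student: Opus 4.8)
The plan is to recognize each of the three sets as a standard convex region — a strict epigraph region, an open half-space, or a strict hypograph region — cut out by a suitably convex or concave function, and then to invoke that intersections (and cylinders) of convex sets stay convex. In each case the defining inequalities constrain only a few of the coordinates of $(x,\lambda,\proxy,\piproxy)\in\R^\mydim$, with the rest free, so the set is a cylinder over its projection and convexity reduces to convexity in those active coordinates.

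First I would treat $\freeset[\proxy]_i(x^*,y^*)$. Its two defining conditions involve only $x_{-i}$ and $\proxy_i$, so convexity reduces to the $(x_{-i},\proxy_i)$-coordinates. Under \Cref{ass: IC}\ref{ass: IC1} the map $x_{-i}\mapsto g_i(y_i^*,x_{-i})$ is affine, hence $\{g_i(y_i^*,x_{-i})\leq 0\}$ is a polyhedron. The crux is that $x_{-i}\mapsto\pi_i(y_i^*,x_{-i})$ is convex under \emph{both} alternatives of \Cref{ass: IC}\ref{ass: IC2}: under \ref{ass: IC2: conv} it is the restriction of the globally convex $\pi_i$ to the affine slice $\{x_i=y_i^*\}$, and under \ref{ass: IC2: conc} it is linear, by the assumed linearity of $\pi_i$ in the rivals' strategies, hence convex. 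Therefore $\{\proxy_i>\pi_i(y_i^*,x_{-i})\}$ is the strict epigraph region of a convex function and thus convex, and intersecting with the polyhedron preserves convexity.

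Next I would dispatch the two $\piproxy$-sets. For $\freeset[\piproxy,\conv]_i(x^*)$ under \Cref{ass: IC}\ref{ass: IC2: conv}, the right-hand side $\pi_i(x^*)+\nabla\pi_i(x^*)^\top(x-x^*)$ is affine in $x$, so the defining strict inequality describes an open half-space, which is convex. For $\freeset[\piproxy,\conc]_i$ under \Cref{ass: IC}\ref{ass: IC2: conc}, concavity of $\pi_i$ on $\Rall$ makes $\{(x,\piproxy_i):\piproxy_i<\pi_i(x)\}$ the strict hypograph of a concave function, hence convex, and again the remaining coordinates are free.

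Everything here is elementary convex-analysis bookkeeping, so I expect no real obstacle; the one point deserving care is that for $\freeset[\proxy]_i$ it is specifically the linearity-in-rivals hypothesis of \ref{ass: IC2: conc} — and not mere concavity — that yields convexity of $x_{-i}\mapsto\pi_i(y_i^*,x_{-i})$, since a merely concave restriction would render the strict epigraph region $\{\proxy_i>\pi_i(y_i^*,x_{-i})\}$ nonconvex. This is precisely where the two alternatives of \Cref{ass: IC}\ref{ass: IC2} must be reconciled into a single convexity statement for the $\proxy$-set.
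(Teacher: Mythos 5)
Your proof is correct and takes essentially the same approach as the paper's: both arguments recognize $\freeset[\proxy]_i(x^*,y^*)$ as the intersection of a strict-epigraph-type constraint (convex under \Cref{ass: IC}\ref{ass: IC2: conv} via restriction to the slice $x_i = y_i^*$, linear under \Cref{ass: IC}\ref{ass: IC2: conc}) with the polyhedron given by the linear constraints, and treat $\freeset[\piproxy,\conv]_{i}(x^*)$ as an open half-space and $\freeset[\piproxy,\conc]_{i}$ as the strict hypograph of a concave function. Your closing remark---that for the $\proxy$-set it is the linearity in the rivals' strategies, not mere concavity, that saves the concave case---is precisely the point implicit in the paper's case distinction.
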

\begin{proof}
  We start with the convexity of $\freeset[\proxy]_i(x^*,y^*)$:
  \begin{description}
  \item[{$\freeset[\proxy]_i(x^*,y^*)$}]
    By rewriting the first condition of
    $\freeset[\proxy]_i(x^*,y^*)$ via $\pi_i(y_i^*,x_{-i})- \proxy_i <
    0$,  it follows that
    this is a convex restriction under \Cref{ass: IC}\ref{ass: IC2: conv}
    or a linear restriction under \Cref{ass: IC}\ref{ass: IC2: conc}. Hence, since
    either one of these conditions has to hold under \Cref{ass: IC}, this
    restriction always leads to a convex one.
    Since the second condition is linear  under \Cref{ass:
      IC}\ref{ass: IC1}, the convexity of
    $\freeset[\proxy]_i(x^*,y^*)$ follows.
  \end{description}
  Next, we show that $\freeset[\piproxy,\conv]_{i}(x^*)$ is convex if
  \Cref{ass: IC}\ref{ass: IC2: conv} holds
  while $\freeset[\piproxy,\conv]_{i}(x^*)$ is convex if \Cref{ass:
    IC}\ref{ass: IC2: conc} is fulfilled.
  \begin{description}
  \item[{$\freeset[\piproxy,\conv]_{i}(x^*)$}] Let us rewrite the
    condition of
    $\freeset[\piproxy,\conv]_i(x^*)$ by
    \begin{align*}
      \piproxy_i - \nabla\pi_i(x^*)^\top x < \pi_i(x^*) -
      \nabla\pi_i(x^*)^\top x^*.
    \end{align*}
    The left-hand side is a linear function in $x$ and $\piproxy_i$
    while the right-hand side is a constant. Hence, the claim
    follows.
  \item[{$\freeset[\piproxy,\conc]_{i}$}] This is an immediate
    consequence of the assumed concavity in \Cref{ass: IC}\ref{ass:
      IC2: conc}.\qedhere
  \end{description}
\end{proof}
Next, we show that these sets contain the optimal solution for the
subsets
\begin{align*}
  N^\proxy(x^*,\proxy^*)\define \Defset{i \in N}{\proxy^*_i >
  \VBR_i(x^*_{-i})} \quad \text{and} \quad
  N^\piproxy(x^*,\piproxy^*)\define\Defset{i \in N}{\piproxy^*_i <
  \pi_i(x^*)}
\end{align*}
of players.

\begin{lemma}\label{lem: FreeSet}
	Under \Cref{ass: IC}, the following statement holds:\hfill
	\begin{enumerate}[label=(\roman*)]
		\item $(x^*,\lambda^*,\proxy^*,\piproxy^*)\in
		\freeset[\proxy]_i(x^*,y^*)$ for any $i \in
		N^\proxy(x^*,\proxy^*)$. Moreover, $\freeset[\proxy]_i(x^*,y^*)$
		does not contain any point of the intersection $\Nespiapp\cap
		\FeasRt$ for all $i \in N$.
	\end{enumerate}
	Moreover,
	depending on whether \Cref{ass: IC}\ref{ass: IC2: conv} or
	\Cref{ass: IC}\ref{ass: IC2: conc} holds for $i \in	N^\piproxy(x^*,\piproxy^*)$, one of the
	following statements holds:
	\begin{enumerate}[label=(\roman*)]
		\setcounter{enumi}{1}
		\item
		$(x^*,\lambda^*,\proxy^*,\piproxy^*)$ is contained in the interior
		of $\freeset[\piproxy,\conv]_i(x^*)$. Moreover,
		$\freeset[\piproxy,\conv]_i(x^*)$ does not contain any point of
		the intersection $\Nespiapp\cap \FeasRt$. Hence,
		$\freeset[\piproxy,\conv]_i(x^*)$ is an \Nefreeset[.]
		\item $(x^*,\lambda^*,\proxy^*,\piproxy^*)$ is contained in the
		interior of $\freeset[\piproxy,\conc]_i$. Moreover,
		$\freeset[\piproxy,\conc]_i$ does not contain any point of the
		intersection $\Nespiapp\cap \FeasRt$. Hence,
		$\freeset[\piproxy,\conc]$ is  an \Nefreeset[.]
	\end{enumerate}
\end{lemma}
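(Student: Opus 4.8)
The plan is to treat the three assertions in parallel, since each reduces to two routine verifications: that the distinguished node solution $(x^*,\lambda^*,\proxy^*,\piproxy^*)$ satisfies the defining (strict) inequality of the relevant set, and that no \eqtuple $(x,0,\VBR(x),\pi(x)) \in \Nespiapp \cap \FeasRt$ can. Convexity of all three sets is already provided by \Cref{lem: FreeSetConv}, so here only the membership/interior property and the emptiness of the intersection with $\Nespiapp \cap \FeasRt$ remain to be checked.

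For the $\proxy$-set I would first establish membership. Since $y^*$ is a best response to $x^*$, we have $\pi_i(y_i^*,x_{-i}^*) = \VBR_i(x_{-i}^*)$ and $y_i^* \in X_i(x_{-i}^*)$, i.e. $g_i(y_i^*,x_{-i}^*) \le 0$. For $i \in N^\proxy(x^*,\proxy^*)$ the definition of this index set gives $\proxy_i^* > \VBR_i(x_{-i}^*) = \pi_i(y_i^*,x_{-i}^*)$, so both defining conditions of $\freeset[\proxy]_i(x^*,y^*)$ hold at the node solution. For the emptiness claim, which is asserted for every $i \in N$, I would argue by contradiction: if some \eqtuple lay in $\freeset[\proxy]_i(x^*,y^*)$, then its $\proxy$-coordinate equals $\VBR(x)$ and the two defining conditions would read $g_i(y_i^*,x_{-i}) \le 0$ and $\VBR_i(x_{-i}) > \pi_i(y_i^*,x_{-i})$; but $g_i(y_i^*,x_{-i}) \le 0$ means $y_i^* \in X_i(x_{-i})$, so the defining minimality of $\VBR_i(x_{-i}) = \min_{y_i \in X_i(x_{-i})}\pi_i(y_i,x_{-i})$ forces $\VBR_i(x_{-i}) \le \pi_i(y_i^*,x_{-i})$, a contradiction.

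For the two $\piproxy$-sets I would apply the same template, now exploiting that the $\piproxy$-coordinate of an \eqtuple equals $\pi_i(x)$. Under \Cref{ass: IC}\ref{ass: IC2: conv}, evaluating the defining inequality of $\freeset[\piproxy,\conv]_i(x^*)$ at $x = x^*$ collapses its right-hand side to $\pi_i(x^*)$, so the condition becomes $\piproxy_i^* < \pi_i(x^*)$, which is exactly the membership criterion of $N^\piproxy(x^*,\piproxy^*)$; since the set is a single open linear half-space in $(x,\piproxy_i)$, this membership is automatically interior. Emptiness then follows from the first-order convexity inequality $\pi_i(x) \ge \pi_i(x^*) + \nabla\pi_i(x^*)^\top(x-x^*)$, which contradicts the strict reverse inequality an \eqtuple would have to satisfy. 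Under \Cref{ass: IC}\ref{ass: IC2: conc} the argument is even shorter: $\piproxy_i^* < \pi_i(x^*)$ again places the node solution in $\freeset[\piproxy,\conc]_i$, the set is open because the concave $\pi_i$ is continuous, and no \eqtuple can satisfy the impossible $\pi_i(x) < \pi_i(x)$.

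The computations are all elementary once the definitions are unwound; the one point I would watch is the deliberate asymmetry in the statement, namely that for the $\proxy$-set only membership, not \emph{interior} membership, is asserted. This is because the linear constraint $g_i(y_i^*,x_{-i}) \le 0$ may be active at $x_{-i}^*$, so the node solution can lie on the boundary of $\freeset[\proxy]_i(x^*,y^*)$. Consequently the genuine interior property required for a valid intersection cut is claimed only for the two $\piproxy$-sets, and I would keep the $\proxy$-case separate rather than forcing an interior statement there.
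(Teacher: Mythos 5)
Your proof is correct and follows essentially the same route as the paper's: membership of the node solution is verified from the best-response property of $y^*$ together with the definitions of $N^\proxy(x^*,\proxy^*)$ and $N^\piproxy(x^*,\piproxy^*)$, and the emptiness claims follow from the minimality defining $\VBR_i$, the first-order convexity inequality, and the impossibility of $\pi_i(x) < \pi_i(x)$, respectively. Your closing observation about the asymmetry—that the $\proxy$-set only admits membership, not interior membership, because $g_i(y_i^*,x_{-i})\leq 0$ may be active—is precisely the point the paper makes right after the lemma, where it introduces the enlarged set $\freeset[\proxy,\varepsilon]_i(x^*,y^*)$ to recover an interior point for deriving intersection cuts.
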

\begin{proof}%[Proof of Lemma~\ref{lem: FreeSet}]
	We start with the statements about $ \freeset[\proxy]_i(x^*,y^*)$.
	\begin{enumerate}[label=(\roman*)]
		\item It holds $(x^*,\lambda^*,\proxy^*,\piproxy^*) \in
		\freeset[\proxy]_i(x^*,y^*)$ for any $i \in
		N^\proxy(x^*,\proxy^*)$ because $y_i^* \in \argmin_{y_i \in
			X_i(x_{-i}^*)} \pi_i(y_i,x_{-i}^*)$ implies
		$\proxy^*_i>\VBR_i(x^*_{-i}) = \pi_i(y_i^*,x^*_{-i})$ and $y^*_i \in
		X_i(x^*_{-i})$.
		Moreover, for any $(\bar{x},0,\VBR(\bar{x}),\pi(\bar{x}))\in
		\Nespiapp\cap \FeasRt$ and $i \in N$ with $y^*_i \in
		X_i(\bar{x}_{-i})$, we have that
		\begin{align*}
			\VBR_i(\bar{x}_{-i}) =
			\min_{y_i \in X_i(\bar{x}_{-i})} \pi_i(y_i,\bar{x}_{-i}) \leq
			\pi_i(y_i^*,\bar{x}_{-i})
		\end{align*}
		holds, showing that $(\bar{x},0,\VBR(\bar{x}),\pi(\bar{x})) \notin
		\freeset[\proxy]_i(x^*,y^*)$.
	\end{enumerate}
	Next, we show that,  for $i \in	N^\piproxy(x^*,\piproxy^*)$, the statements about
	$\freeset[\piproxy,\conv]_i(x^*)$ are valid if \Cref{ass:
		IC}\ref{ass: IC2: conv} holds while the statements about
	$\freeset[\piproxy,\conc]_i$ are valid if \Cref{ass: IC}\ref{ass:
		IC2: conc} is true.
	\begin{enumerate}[label=(\roman*)]
		\setcounter{enumi}{1}
		\item The property that $(x^*,\lambda^*,\proxy^*,\piproxy^*)$ is
		contained in the interior of $\freeset[\piproxy,\conv]_i(x^*)$ for
		any $i \in N^\piproxy(x^*,\piproxy^*)$  follows immediately by
		definition.
		Hence, consider an arbitrary $i\in N$ and
		$(\bar{x},0,\VBR(\bar{x}),\pi(\bar{x}))\in \Nespiapp\cap
		\FeasRt$.
		By the assumed convexity of $\pi_i$ in \Cref{ass: IC}\ref{ass:
			IC2: conv}, we get
		\begin{align*}
			\pi_i(\bar{x}) \geq \pi_i(x^*) +  \nabla \pi_i(x^*)^\top(\bar{x}-x^*),
		\end{align*}
		proving $(\bar{x},0,\VBR(\bar{x}),\pi(\bar{x}))\notin
		\freeset[\piproxy,\conv]_i(x^*)$.
		\item The statements are an immediate consequence of the definitions
		of the sets $\freeset[\piproxy,\conc]_i$ and $
		N^\piproxy(x^*,\piproxy^*)$ together with $\pi_i$ being
		continuous by the assumed concavity in \Cref{ass: IC}\ref{ass:
			IC2: conc}.\qedhere
	\end{enumerate}
\end{proof}

The set
$\freeset[\proxy]_i(x^*,y^*)$ is, in general, not suitable for
deriving ICs as it is not guaranteed that
$(x^*,\lambda^*,\proxy^*,\piproxy^*)$ belongs to its interior.
This motivates us to defined the following 
extended version:
\begin{align*}
  \freeset[\proxy,\varepsilon]_i(x^*,y^*) \define
  \Defset{(x,\lambda,\proxy,\piproxy) \in \R^\mydim}{\proxy_i \geq
  \pi_i(y_i^*,x_{-i}), \, g_i(y_i^*,x_{-i})\leq \varepsilon
  \mathbf{1}},
\end{align*}
where $\varepsilon>0$ and   $\mathbf{1}$ denotes the vector of all ones (in appropriate
dimension).
Provided that no point in $\Nespiapp\cap\FeasRt$ is contained in
the interior of this extended set, it follows from \Cref{lem:
  FreeSetConv,lem: FreeSet} that  $\freeset[\proxy,\varepsilon]_i(x^*,y^*)$
is an \Nefreeset under \Cref{ass: IC}.
In this regard, \textcite{DSHS25} provided  sufficient conditions,
which carry directly over to our setting and are listed in the
following. Here, 
we denote by $x_{i}^\inte \define (x_{i,1}, \dotsc, x_{i,k_i})$ the
integer components of player $i$'s strategy and analogously by
$x_{-i}^\inte$ the rivals' integer components. 

\begin{lemma}\label{lem: SuffConNEfree}
  Consider some $i \in N^\proxy(x^*,\proxy^*)$ and the following
  statements with a suitable integral matrix~$A_i$ and vector~$b_i$:
  \begin{enumerate}[label=(\roman*)]
  \item $g_i(y_i^*,\bar{x}_{-i})$ is integral for every $\bar{x} \in
    \Nesapp$. \label{lem: SuffConNEfree: Int}
  \item $y_i^*$ is integral and $g_i(y_i^*,\bar{x}_{-i}) = A_i
    (y_i^*,\bar{x}_{-i}^\inte) - b_i$ for all $\bar{x} \in
    \Nesapp$.\label{lem: SuffConNEfree: IntDep}
  \item $g_i(y_i^*,\bar{x}_{-i}) = A_i
    ((y_i^*)^\inte,\bar{x}_{-i}^\inte) - b_i$ for all $\bar{x} \in
    \Nesapp$.\label{lem: SuffConNEfree: IntDepOnly}
  \item $y_i^*$ and all  $\bar{x} \in \Nesapp$ are integral and
    $g_i(y_i^*,\bar{x}_{-i}) = A_i (y_i^*,\bar{x}_{-i})  - b_i$
    holds for all $\bar{x} \in \Nesapp$.\label{lem: SuffConNEfree:
      IntAll}
  \end{enumerate}
  If \ref{lem: SuffConNEfree: Int} holds, then
  $\freeset[\proxy,\varepsilon]_i(x^*,y^*)$ with $\varepsilon = 1$ does not
  contain any point of $\Nespiapp\cap C_t\cap B_t$ in its
  interior. Moreover, each of \ref{lem: SuffConNEfree: IntDep},
  \ref{lem: SuffConNEfree: IntDepOnly}, and \ref{lem: SuffConNEfree:
    IntAll} imply \ref{lem: SuffConNEfree: Int}.
\end{lemma}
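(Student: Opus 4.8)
The plan is to establish the two assertions in turn. For the implication from Condition~\ref{lem: SuffConNEfree: Int} I would argue by contradiction: suppose a tuple $(\bar{x},0,\VBR(\bar{x}),\pi(\bar{x}))\in\Nespiapp\cap C_t\cap B_t$, with $\bar{x}\in\Nesapp\subseteq\FeasStrats$, lies in the interior of $\freeset[\proxy,1]_i(x^*,y^*)$. The two defining inequalities of this set involve only the coordinates $x_{-i}$ and $\proxy_i$, all remaining coordinates being unconstrained, so interior membership is equivalent to $(\bar{x}_{-i},\VBR_i(\bar{x}_{-i}))$ lying in the interior of the corresponding region in the $(x_{-i},\proxy_i)$-plane. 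In particular the proxy inequality holds strictly, $\VBR_i(\bar{x}_{-i})>\pi_i(y_i^*,\bar{x}_{-i})$. Since $\VBR_i(\bar{x}_{-i})=\min_{y_i\in X_i(\bar{x}_{-i})}\pi_i(y_i,\bar{x}_{-i})\leq\pi_i(y_i^*,\bar{x}_{-i})$ whenever $y_i^*\in X_i(\bar{x}_{-i})$, this forces $y_i^*\notin X_i(\bar{x}_{-i})$, hence some component satisfies $g_{i,j}(y_i^*,\bar{x}_{-i})>0$. Condition~\ref{lem: SuffConNEfree: Int} upgrades this to $g_{i,j}(y_i^*,\bar{x}_{-i})\geq 1$, whereas membership in $\freeset[\proxy,1]_i(x^*,y^*)$ yields $g_{i,j}(y_i^*,\bar{x}_{-i})\leq 1$, so that $g_{i,j}(y_i^*,\bar{x}_{-i})=1$.

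The step I expect to be most delicate is converting this active constraint into a genuine contradiction, as an equality alone does not by itself rule out interior membership. Here I would exploit that $y^*$ is a best response to $x^*$, so $y_i^*\in X_i(x^*_{-i})$ and hence $g_{i,j}(y_i^*,x^*_{-i})\leq 0$. By \Cref{ass: IC}\ref{ass: IC1} the map $x_{-i}\mapsto g_{i,j}(y_i^*,x_{-i})$ is affine, and since it equals $1$ at $\bar{x}_{-i}$ but is $\leq 0$ at $x^*_{-i}$, it is non-constant and $\bar{x}_{-i}\neq x^*_{-i}$. Along the ray $\bar{x}_{-i}+t(\bar{x}_{-i}-x^*_{-i})$, $t>0$, this affine function strictly exceeds $1$, producing points arbitrarily close to $\bar{x}_{-i}$ that violate $g_i(y_i^*,x_{-i})\leq\mathbf{1}$. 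Thus no neighborhood of the tuple is contained in $\freeset[\proxy,1]_i(x^*,y^*)$, contradicting interior membership; this proves the first claim.

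For the second assertion the arguments are routine and rest only on the closure of $\Z$ under integral affine maps. In each of \ref{lem: SuffConNEfree: IntDep}, \ref{lem: SuffConNEfree: IntDepOnly}, and \ref{lem: SuffConNEfree: IntAll} one writes $g_i(y_i^*,\bar{x}_{-i})=A_i(\,\cdot\,)-b_i$ with integral data $A_i,b_i$ applied to an integral argument: in \ref{lem: SuffConNEfree: IntDep} the argument $(y_i^*,\bar{x}_{-i}^\inte)$ is integral because $y_i^*$ is integral by assumption and $\bar{x}_{-i}^\inte$ is the integer part of the feasible profile $\bar{x}\in\Nesapp\subseteq\FeasStrats$; in \ref{lem: SuffConNEfree: IntDepOnly} only the integer components $((y_i^*)^\inte,\bar{x}_{-i}^\inte)$ enter and are integral by construction; and in \ref{lem: SuffConNEfree: IntAll} the full vector $(y_i^*,\bar{x}_{-i})$ is integral since both $y_i^*$ and $\bar{x}$ are. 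In each case $g_i(y_i^*,\bar{x}_{-i})$ is integral, i.e. Condition~\ref{lem: SuffConNEfree: Int} holds, as required.
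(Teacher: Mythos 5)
Your proof is correct. Note that the paper itself contains no proof of \Cref{lem: SuffConNEfree}---it imports the result from \textcite{DSHS25} with the remark that the conditions ``carry directly over''---so there is no in-paper argument to compare against; your write-up supplies the missing proof, and it is the natural one. In particular, you correctly isolate the one genuinely delicate step: the active constraint $g_{i,j}(y_i^*,\bar{x}_{-i})=1$ alone does not preclude interior membership, and your combination of linearity of $g_i$ (\Cref{ass: IC}\ref{ass: IC1}, which is in force throughout this section) with $g_i(y_i^*,x^*_{-i})\leq 0$ (feasibility of the best response at $x^*$) is exactly what is needed to rule out the constant case and exhibit points arbitrarily close to $\bar{x}_{-i}$ violating $g_i(y_i^*,x_{-i})\leq \mathbf{1}$; the preliminary step that interiority forces $\VBR_i(\bar{x}_{-i})>\pi_i(y_i^*,\bar{x}_{-i})$, via a downward perturbation of the free coordinate $\proxy_i$, is also sound. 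The verifications that \ref{lem: SuffConNEfree: IntDep}, \ref{lem: SuffConNEfree: IntDepOnly}, and \ref{lem: SuffConNEfree: IntAll} each imply \ref{lem: SuffConNEfree: Int} are routine and correct, using only that $y_i^*\in X_i(x^*_{-i})\subseteq \Z^{k_i}\times\R^{l_i}$ and $\bar{x}\in\FeasStrats$ guarantee integrality of the relevant components.
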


Let us note that analogous assumptions are made in the respective
literature on mixed-integer bilevel optimization; see, e.g.,
\textcite{fischetti2018use}, \textcite{Lozano_Smith:2017}, or
\textcite{Horländer_et_al:2024}.

Concluding, by \Cref{lem: proxy>BR} and the above, we can guarantee
under suitable assumptions the existence of \cut[s] via an IC
whenever a cut is needed. 
For an explicit description on how to construct an IC, we refer to
standard literature on IC theory such as \textcite[Section
6]{Conforti-et-al:2014} or the description presented by \textcite{DSHS25}.

%%% Local Variables:
%%% mode: latex
%%% TeX-master: "approximate-gne-via-bnc-preprint"
%%% End:

\subsection{Standard Nash Equilibrium Problems}
\label{sec:cuts-ne}

We now come to the special case of $G$ being a standard NEP,
i.e., $X_i(x_{-i}) \equiv X_i$ for some fixed strategy set~$X_i$
given by $X_i \define \defset{x_i \in \Z^{k_i}\times \R^{l_i}}{g_i(x_i)
  \leq 0}$.
Note that the set of feasible strategy profiles is then
given by $\FeasStrats= \prod_{i \in N}X_i$.

We derive in the following a cut specifically tailored 
to the NEP setting and suitable conditions under which the 
B\&C method finitely terminates. Since we assume that the 
relaxation~$\FeasStratsRel$ is bounded, there are only finitely many
different possible combinations of feasible integral strategy
components.
In particular, there may only appear finitely many nodes in the
B\&C search tree.
Hence, it is sufficient to show that \Cref{algorithm:BC} processes
every node in finite time to show that the overall B\&C method
finitely terminates.

Let us now come to the promised cuts.
A fundamental difference in comparison to the
general GNEP case and the ICs used there is that we do
not require optimal solutions to the node problem
to be a vertex of a polyhedral set to derive
a suitable cut. In this regard, the cuts themselves do not
need to be linear either. In particular, the
approximation of the cost function via $\piproxy$
becomes unnecessary and the addition of these variables becomes
obsolete. In order to keep the same notation, we
will still consider the problem \eqref{model:node_t}
for a node problem but assume that we already have employed
the cuts $\piproxy \geq \pi(x)$ in the root node.
Note that in this case, \Cref{lem: proxy>BR}
shows that $N^\proxy(x^*,\proxy^*)\neq \emptyset$ holds
in the situation considered there.

\begin{lemma}\label{lem: Cut}
  In the situation of \Cref{def:non-ne-cut}, the best-response-cut given
  by
  \begin{align}
    \label{eq: PlayerCut}
    c(x,\lambda,\proxy,\piproxy; x^*,\lambda^*, \proxy^*,\piproxy^*,
    y^*,\mappro,\addappro) \define
    c_i(x,\lambda,\proxy,\piproxy;y^*) \define \proxy_i -
    \pi_i(y_i^*,{x}_{-i})
    \leq 0
  \end{align}
  yields \acut for every $i \in N^\proxy(x^*,\proxy^*)$.
  It also holds $N(x^*,\proxy^*) \neq \emptyset$ in the situation of
  Line~\ref{line: proxy>BR} by \Cref{lem: proxy>BR}, i.e., we can
  always use \acut of the form~\eqref{eq: PlayerCut}.
  Moreover, these cuts are satisfied by all
  $(x,0,\VBR(x),\piproxy)\in \FeasStrats\times \R\times \R^N \times \R^N$ and, hence,
  are globally valid.
\end{lemma}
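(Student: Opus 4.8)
The plan is to verify the two defining properties of \acut from \Cref{def:non-ne-cut} for the inequality \eqref{eq: PlayerCut} directly, and then the global-validity claim, which in fact subsumes Property~\ref{def:non-ne-cut:1}. The decisive structural feature of the standard-NEP setting is that the strategy set $X_i$ no longer depends on the rivals' profile, so a fixed $y_i^*$ that is feasible at $x_{-i}^*$ stays feasible at \emph{every} rival profile; this is precisely what makes the cut globally valid and is the point where the more delicate IC construction of the GNEP case can be avoided.

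First I would establish Property~\ref{def:non-ne-cut:2}, i.e.\ that the cut is violated at $(x^*,\lambda^*,\proxy^*,\piproxy^*)$. Fix $i \in N^\proxy(x^*,\proxy^*)$. Since $y_i^*$ is a best response \wrt $x^*$, we have $\pi_i(y_i^*,x_{-i}^*)=\VBR_i(x_{-i}^*)$ by definition of the best-response value, and hence $c_i(x^*,\lambda^*,\proxy^*,\piproxy^*;y^*)=\proxy_i^*-\VBR_i(x_{-i}^*)>0$ by the very definition of the index set $N^\proxy(x^*,\proxy^*)$. Thus the point to be cut off violates \eqref{eq: PlayerCut}.

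Next I would prove the global-validity statement, which simultaneously yields Property~\ref{def:non-ne-cut:1}. Take any $(x,0,\VBR(x),\piproxy)\in \FeasStrats\times\R\times\R^N\times\R^N$, so that $x\in\FeasStrats$. Because the game is a standard NEP, $X_i(x_{-i})=X_i\ni y_i^*$, so $y_i^*$ is a feasible deviation at $x_{-i}$ and therefore $\VBR_i(x_{-i})=\min_{y_i\in X_i}\pi_i(y_i,x_{-i})\le \pi_i(y_i^*,x_{-i})$. Substituting $\proxy_i=\VBR_i(x_{-i})$ into \eqref{eq: PlayerCut} gives $c_i(x,0,\VBR(x),\piproxy;y^*)=\VBR_i(x_{-i})-\pi_i(y_i^*,x_{-i})\le 0$, so the cut is satisfied at every \eqtuple, and a fortiori at every point of $\Nespiapp\cap B_t\cap C_t$.

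Finally, the nonemptiness of $N^\proxy(x^*,\proxy^*)$ in the situation of Line~\ref{line: proxy>BR} is immediate from \Cref{lem: proxy>BR}: since we assume the cut $\piproxy\ge\pi(x)$ has already been imposed in the root node, alternative~\ref{enum: lem: proxy>BR:2} of that lemma cannot occur, so alternative~\ref{enum: lem: proxy>BR:1} must hold, which is exactly $N^\proxy(x^*,\proxy^*)\neq\emptyset$; this guarantees that a cut of the form \eqref{eq: PlayerCut} is always available. I do not expect a genuine obstacle here, as each step is a short computation; the only point requiring care is to invoke the NEP-specific independence of $X_i$ from $x_{-i}$ when arguing feasibility of $y_i^*$ across all rival profiles, since this is exactly where the corresponding GNEP argument breaks down and the simpler, possibly nonlinear, best-response cut becomes admissible.
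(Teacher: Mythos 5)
Your proof is correct and follows essentially the same route as the paper's: global validity (and hence Condition~\ref{def:non-ne-cut:1}) via $y_i^*\in X_i$ being feasible at every rival profile in the NEP setting, and Condition~\ref{def:non-ne-cut:2} via $\VBR_i(x_{-i}^*)=\pi_i(y_i^*,x_{-i}^*)$ together with the definition of $N^\proxy(x^*,\proxy^*)$. Your explicit treatment of the nonemptiness claim through \Cref{lem: proxy>BR} is also correct; the paper handles that point in the surrounding text rather than in the proof itself.
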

\begin{proof}
  Consider an arbitrary tuple $(\bar{x},0,\VBR(\bar{x}),\piproxy) \in
  \FeasStrats\times \R\times \R^N \times \R^N$ and $i \in N$.
  Then, we have
  \begin{equation*}
    \VBR_i(\bar{x}_{-i})=
    \min_{y_i \in X_i} \pi_i(y_i,\bar{x}_{-i}) \leq
    \pi_i(y_i^*,\bar{x}_{-i}).
  \end{equation*}
  Hence, $c(\bar{x},0,\VBR(\bar{x}),\piproxy;y^*)\leq 0$ and the cut
  fulfills Condition~\ref{def:non-ne-cut:1} of \Cref{def:non-ne-cut}.

  Condition~\ref{def:non-ne-cut:2} is an immediate
  consequence of the definition of $N^\proxy(x^*,\proxy^*)$ and
  $\VBR_i(x_{-i}^*) = \pi_i(y_i^*,x_{-i}^*)$.
\end{proof}

In the following, we derive sufficient conditions under which
\Cref{algorithm:BC} (and, hence, the overall B\&C method)
terminates in finite time when using the cuts introduced in \Cref{lem:
  Cut}.
The following Theorem~\ref{thm: FinitelyManyCutsGeneral} provides an
abstract sufficient condition, which was shown in
\textcite[Lemma~4.6 and~4.7]{DSHS25} to be satisfied
for the important two special cases in which
\begin{enumerate}[label=(\roman*)]
\item the players' cost functions are concave in their own continuous
  strategies or
\item the players' cost function only depend on their own strategy and
  the rivals integer strategy components.
\end{enumerate}

To state the promised theorem, we introduce the following terminology.
Let us denote by
\begin{align*}
  \BR_i(x_{-i}) \define \argmin \Defset{\pi_i(y_i,x_{-i})}{y_i\in X_i}
\end{align*}
the set of best responses to $x_{-i} \in \FeasStrats[-i]$.
Moreover, let us define the set of all possible best response sets
by
\begin{align*}
  \BRcom_i \define \Defset{\BR_i(x_{-i})}{x_{-i} \in
  \FeasStrats[-i]} \subseteq \powset(\FeasStrats[i]),
\end{align*}
where we denote
by $\powset(M)$ the power set of a set $M$.

\begin{theorem}\label{thm: FinitelyManyCutsGeneral}
  Suppose that $\abs{\BRcom_i}$, $i\in N$ are finite.
  If we use the \cut \eqref{eq: PlayerCut} from \Cref{lem: Cut}
  in Line~\ref{line: Cut} of \Cref{algorithm:BC}, then
  \Cref{algorithm:BC} terminates after a finite number of steps.
\end{theorem}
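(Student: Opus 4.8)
The plan is to build on the two facts already established above: the B\&C tree contains only finitely many nodes, and it therefore suffices to show that \Cref{algorithm:BC} finishes processing a single node~$t$ after finitely many steps. Within node~$t$, the algorithm returns to Step~\ref{line:step_1} precisely when the current optimal node solution $(x^*,\lambda^*,\proxy^*,\piproxy^*)$ is integer-feasible but violates the equilibrium test in Line~\ref{line: NEFound}; in that case a best-response cut \eqref{eq: PlayerCut} is appended to $C_t$ and the node is re-solved, while in every other case we branch, prune, or terminate. Since each node problem is solvable whenever feasible (as discussed after \Cref{algorithm:BC}), it is enough to bound the number of cuts added at node~$t$. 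I would show that this number never exceeds $\sum_{i \in N}\abs{\BRcom_i}$, which is finite by assumption.

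The heart of the proof is a charging argument against best-response sets. Every cut added at node~$t$ has the form $\proxy_i \le \pi_i(y_i^*,x_{-i})$ for some player $i \in N^\proxy(x^*,\proxy^*)$ and some $y_i^* \in \BR_i(x_{-i}^*)$, and I would charge it to the pair $(i,\BR_i(x_{-i}^*))$ with $\BR_i(x_{-i}^*) \in \BRcom_i$. Suppose two cuts were charged to the same pair $(i,B)$: an earlier one using some $y_i' \in B$, and a later one added at a solution $(x^*,\lambda^*,\proxy^*,\piproxy^*)$ with $\BR_i(x_{-i}^*) = B$. Because the strategy set $X_i$ is fixed in the NEP setting, $y_i' \in B = \BR_i(x_{-i}^*)$ means that $y_i'$ is itself a best response to $x_{-i}^*$, so $\pi_i(y_i',x_{-i}^*) = \VBR_i(x_{-i}^*)$. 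As the earlier cut $\proxy_i \le \pi_i(y_i',x_{-i})$ is still present in $C_t$, the later solution must satisfy $\proxy_i^* \le \pi_i(y_i',x_{-i}^*) = \VBR_i(x_{-i}^*)$, contradicting $i \in N^\proxy(x^*,\proxy^*)$, which requires $\proxy_i^* > \VBR_i(x_{-i}^*)$. Hence each pair $(i,B)$ is charged at most once.

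It then follows that at most $\sum_{i\in N}\abs{\BRcom_i}$ cuts are generated at node~$t$. To see that the cut loop indeed halts, I would note that once cuts have been added for all of $\BRcom_i$ and all players~$i$, any integer-feasible node solution $x^*$ satisfies $\proxy_i^* \le \VBR_i(x_{-i}^*)$ for every~$i$, i.e.\ $N^\proxy(x^*,\proxy^*) = \emptyset$; since the constraints $\piproxy \ge \pi(x)$ are enforced in the root node we also have $N^\piproxy(x^*,\piproxy^*) = \emptyset$. The contrapositive of \Cref{lem: proxy>BR} then guarantees that the equilibrium condition in Line~\ref{line: NEFound} holds, so the algorithm returns $x^*$ instead of adding a further cut. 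Thus node~$t$ is processed in finitely many steps, and together with the finiteness of the tree this proves finite termination.

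The hard part is not the counting itself but getting the right quantity to count: since players may possess continuous strategy components, there are in general infinitely many integer-feasible node solutions, so one cannot bound the cuts by the finitely many integral strategy combinations. The finiteness of $\BRcom_i$ is precisely what repairs this, and the non-routine step is the observation that a previously added best-response cut remains \emph{tight}, i.e.\ equal to $\VBR_i(x_{-i}^*)$, at every later solution sharing the same best-response set~$B$---this is what allows each set to be charged only once and is where the fixed-strategy-set structure of NEPs is essential.
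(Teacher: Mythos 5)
Your proposal is correct and follows essentially the same route as the paper: it decomposes termination into finitely many nodes plus finitely many cut iterations per node, and your charging argument---that two cuts for player~$i$ cannot share the same best-response set $B \in \BRcom_i$, since an earlier cut built from $y_i' \in B$ would force $\proxy_i^* \leq \pi_i(y_i',x_{-i}^*) = \VBR_i(x_{-i}^*)$ at any later solution with $\BR_i(x_{-i}^*) = B$, contradicting $i \in N^\proxy(x^*,\proxy^*)$---is exactly the pigeonhole argument the paper isolates in \Cref{lem: FinitelyManyCutsGeneral} (pairwise distinctness of the realized best-response sets) before applying it to the cut-adding iterations. Your closing appeal to the contrapositive of \Cref{lem: proxy>BR} is a harmless extra remark; the counting bound alone already forces the loop to halt.
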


The following lemma states a general statement about 
how the cuts in~\eqref{eq: PlayerCut}
act on the feasible sets. From this, the claim of 
\Cref{thm: FinitelyManyCutsGeneral} follows almost immediately.

\begin{lemma}\label{lem: FinitelyManyCutsGeneral}
  Consider $i \in N$, $j \in \N$, a sequence of points
  $(x^*_s,\proxy^*_s),s \in [j]\define\{ 1,\ldots,j\}$, with
  corresponding best responses $y^*_{s,i} \in
  \BR_i((x_s^*)_{-i}),s\in[j]$, and corresponding sets
  \begin{align}
    \label{eq: FinitelyManyCutsGeneral}
    C^s_i\define \Defset{(x,\proxy)\in \FeasStrats \times \R^N}{\proxy_i
    \leq \pi_i(y^*_{s',i},x_{-i}), \, s'\in[s-1]}, \ s\in[j].
  \end{align}
  Then, if $(x^*_j,\proxy^*_j)\in C^j_i$ holds and if there exists $\hat{s} \in
  [j-1]$ with  $y^*_{\hat{s},i} \in \BR_i((x^*_j)_{-i})$, we have $i
  \notin N^\proxy(x^*_j,\proxy^*_j)$.
  In particular, if for every $s \in [j]$,  $(x^*_s,\proxy^*_s) \in
  C^s_i$ and $i \in N^\proxy(x^*_s,\proxy^*_s)$ holds, we obtain $j \leq
  \abs{\BRcom_i}$.
\end{lemma}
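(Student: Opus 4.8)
The plan is to establish the first claim by a direct two-line estimate and then obtain the ``in particular'' part as a pigeonhole consequence of it. I expect no serious obstacle here; the only genuine idea is to read the second statement as a distinctness assertion about best-response sets and to feed the first claim into it in contrapositive form.

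For the first claim I would argue as follows. Assume $(x^*_j,\proxy^*_j)\in C^j_i$ and pick $\hat{s}\in[j-1]$ with $y^*_{\hat{s},i}\in\BR_i((x^*_j)_{-i})$. Since the defining constraints of $C^j_i$ range over all $s'\in[j-1]$, membership gives directly that $(\proxy^*_j)_i \leq \pi_i(y^*_{\hat{s},i},(x^*_j)_{-i})$. By definition of $\BR_i$, the chosen $y^*_{\hat{s},i}$ attains the best-response value, so $\pi_i(y^*_{\hat{s},i},(x^*_j)_{-i}) = \VBR_i((x^*_j)_{-i})$. Chaining the two relations yields $(\proxy^*_j)_i \leq \VBR_i((x^*_j)_{-i})$, which is exactly $i\notin N^\proxy(x^*_j,\proxy^*_j)$. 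The same argument applies verbatim with any index $s'\le j$ in place of $j$, since it uses only the data indexed by $s'$ and the set $C^{s'}_i$.

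For the second claim, I would show that under the stated hypotheses the $j$ best-response sets $\BR_i((x^*_s)_{-i})$, $s\in[j]$, are pairwise distinct. First, $(x^*_s,\proxy^*_s)\in C^s_i\subseteq\FeasStrats\times\R^N$ guarantees $(x^*_s)_{-i}\in\FeasStrats[-i]$, so each $\BR_i((x^*_s)_{-i})$ is a genuine element of $\BRcom_i$. Now fix $s<s'$ in $[j]$ and suppose for contradiction that $\BR_i((x^*_s)_{-i})=\BR_i((x^*_{s'})_{-i})$. Since $y^*_{s,i}\in\BR_i((x^*_s)_{-i})$ by hypothesis, this equality gives $y^*_{s,i}\in\BR_i((x^*_{s'})_{-i})$. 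On the other hand, applying the contrapositive of the first claim at index $s'$---using $(x^*_{s'},\proxy^*_{s'})\in C^{s'}_i$ together with the assumption $i\in N^\proxy(x^*_{s'},\proxy^*_{s'})$---shows that no $\hat{s}\in[s'-1]$ can satisfy $y^*_{\hat{s},i}\in\BR_i((x^*_{s'})_{-i})$; in particular this fails for $\hat{s}=s\in[s'-1]$, contradicting the previous line.

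Hence all $j$ sets $\BR_i((x^*_s)_{-i})$ are distinct members of $\BRcom_i$, which forces $j\leq\abs{\BRcom_i}$. The crux---and the only place where any thought is needed---is the reduction of the counting bound to distinctness plus the clean application of the contrapositive of the first part; everything else is bookkeeping on the nested sets $C^s_i$ and the defining identity $\pi_i(y^*_{\hat{s},i},\cdot)=\VBR_i(\cdot)$ for best responses.
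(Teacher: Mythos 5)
Your proof is correct and follows essentially the same route as the paper's: the direct chain $(\proxy^*_j)_i \leq \pi_i(y^*_{\hat{s},i},(x^*_j)_{-i}) = \VBR_i((x^*_j)_{-i})$ for the first claim, and then pairwise distinctness of the sets $\BR_i((x^*_s)_{-i})$ via the contrapositive of the first claim, giving $j \leq \abs{\BRcom_i}$ by counting. Your write-up is in fact slightly cleaner than the paper's, which contains an index typo in the stated hypothesis of its first step (it writes $y^*_{j,i}\in\BR_i((x^*_{\hat s})_{-i})$ where the argument actually uses $y^*_{\hat s,i}\in\BR_i((x^*_j)_{-i})$), and you make explicit the observation---implicit in the paper---that the first claim applies verbatim at any index $s'\le j$.
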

\begin{proof}
  Assume that  $(x^*_j,\proxy^*_j)\in C^j_i$ and there exists $\hat{s}
  \in [j]$ with  $y^*_{j,i} \in \BR_i((x^*_{\hat{s}})_{-i})$.
  Then, we have
  \begin{align*}
    (\proxy_j^*)_i
    \leq \pi_i(y^*_{{\hat{s}},i},(x^*_j)_{-i})
    =  \VBR_i((x^*_j)_{-i}).
  \end{align*}
  The inequality follows from $(x^*_j,\proxy^*_j)\in C^j_i$ and the
  equality holds due to $y^*_{{\hat{s}},i} \in \BR_i((x^*_j)_{-i})$.

  Now assume that for every $s \in [j]$, $(x^*_s,\proxy^*_s) \in C^s_i$
  and  $i \in N^\proxy(x^*_s,\proxy^*_s)$ holds.
  Consider two arbitrary sequence indices $s_1<s_2\leq j$. By applying
  the first part of the \namecref{lem: FinitelyManyCutsGeneral} for
  ${\hat{s}} = s_1$ and  $j = s_2$, we know that
  $\BR_i((x^*_{s_1})_{-i}) \neq \BR_i((x^*_{s_2})_{-i})$ has to
  hold since $i \in  N^\proxy(x^*_{s_2},\proxy^*_{s_2})$ is true by assumption.

  Since this holds for arbitrary sequence indices,
  $\BR_i((x^*_{s})_{-i}),s\in[j]$, must be pairwise different,
  implying the claim.
\end{proof}

With this lemma at hand, we are now in the position to prove
Theorem~\ref{thm: FinitelyManyCutsGeneral}.

\begin{proof}[Proof of Theorem~\ref{thm: FinitelyManyCutsGeneral}]
  Consider an arbitrary sequence of iterations of \Cref{algorithm:BC}
  with corresponding optimal solutions $(x^*_s,\lambda^*_s,\proxy^*_s,\piproxy^*_s)$
  , $s\in[l+1]$,
  and best responses $y_s^*$, $s\in[l]$ for an $l \in \N$.
  For every $s\leq l$, denote by $C_t^s$ the set $C_t$ defined via the
  cuts of node $t$ after the $(s-1)$-th iteration.
  Moreover, denote by $\Proj_{x,\proxy}(C_t^s)$ the projection of $C_t^s$ to
  the $(x,\proxy)$-space.
  Let $s_1^i<\ldots<s_{j_i}^i\leq l$ be the indices in which the
  feasible set was augmented with a cut from \Cref{lem: Cut} for
  player $i\in N$.
  Then,
  \begin{equation*}
    (x^*_{s_k^i},\proxy^*_{s_k^i}) \in \Proj_{x,\proxy}(C_t^{s_k^i})
    \quad \text{and} \quad
    i \in N^\proxy(x^*_{s_k^i}, \proxy^*_{s_k^i})
    \text{ for all } k\leq j_i.
  \end{equation*}
  Define $C^k_i$ as in \eqref{eq: FinitelyManyCutsGeneral}
  \wrt this subsequence and observe that $ \Proj_{x,\proxy}
  (C_t^{s_k^i}) \subseteq C^k_i$.  Thus,
  \Cref{lem: FinitelyManyCutsGeneral} is applicable, implying $j_i
  \leq \abs{\BRcom_i}$.
  Hence, $l =\sum_{i\in N} j_i \leq \sum_{i\in N} \abs{\BRcom_i}$,
  which shows the claim.
\end{proof}

%%% Local Variables:
%%% mode: latex
%%% TeX-master: "approximate-gne-via-bnc-preprint"
%%% End:

%%% Local Variables:
%%% mode: latex
%%% TeX-master: "approximate-gne-via-bnc-preprint"
%%% End:

\section{Adaptive B\&C for Best-Approximate Nash Equilibria}
\label{sec:best-approximate}

In this section, we brief\/ly discuss \aBinBC to find
best-approximate NE.
To this end, we ignore the additive part by setting $\beta = 0$ and
exemplarily focus on finding the minimum $\alpha^{\min}\in \R$ such that an
$(\alpha^{\min},0)$-NE exist.
Here, we restrict ourselves to the case in which every player has the same
approximation factor $\alpha^{\min}$ and, with a slight abuse of
notation, write $(\alpha^{\min},0)$-NE instead of
$((\alpha^{\min},\ldots,\alpha^{\min}),0)$-NE.

To provide a clearer intuition, let us first consider
a standard binary search over potential values of $\alpha^{\min}$
in the interval $[1, \alpha^+]$
coupled with our B\&C method considered so far.
Starting with an initially given value~\mbox{$\alpha^+ \geq 1$}, we check if
there exists an $(\alpha^+,0)$-NE by using the described B\&C method
with Algorithm~\ref{algorithm:BC} to solve the nodes of the tree.
If so, we use this $\alpha^+$ as the upper bound.
Otherwise we set $\alpha^+ \leftarrow F \alpha^+$ with an update
factor $F \gg 1$ and re-solve the problem.
This is carried out until we find an
appropriate $\alpha^+$ for which a corresponding approximate equilibrium
exists.
Now we can carry out a binary search over this interval $[1,
\alpha^+]$, where in each step with value $\tilde{\alpha} \in [1,
\alpha^+]$,  we can apply our B\&C method to determine
the existence or non-existence of a corresponding $(\tilde{\alpha},0)$-NE
and decrease or increase the value of $\tilde{\alpha}$ accordingly.

The above described simple binary search method is 
a multi-tree algorithm since it explores a new B\&C tree in every
iteration. In the following, we describe how to realize a
single-tree implementation using a more sophisticated way to combine
the binary search with our B\&C method.
As described above, we determine an interval
$[1, \alpha^+]$ of potential values for $\alpha^{\min}$.
In what follows, let us call a node {\it explored} if
Algorithm~\ref{algorithm:BC} either branched on it or pruned it.
Moreover, we call a node {\it unexplored} if it is still in the queue
or if it is the node being currently processed.
In particular, if an approximate NE is found by
Algorithm~\ref{algorithm:BC} in a certain node of the tree, this node
will also be considered as unexplored in what follows.
As in the previously discussed approach, we perform a
binary search over $[1, \alpha^+]$, using our B\&C method
as a subroutine. However, if in some iteration with value
$\tilde{\alpha}$ we find a $(\tilde{\alpha},0)$-NE, we
do not need to restart the B\&C tree search from scratch.
Instead, we can recycle the set $\tilde{N}$ of unexplored
nodes corresponding to the iteration in which the $(\tilde{\alpha},0)$-NE was found.
This leads
to  a single-tree realization of the \BinBC where we
only have to update the constraints~\eqref{eq: LambdaBound}
involving the variable~$\lambda$ to fit to the new $\alpha$ parameter.
We consider two variants that differentiate in the amount of
recycled information: In \reusewithoutcuts, we only reuse the
tree structure, i.e., the branching and pruning decisions.
In \reusetreesearch, we also keep, in addition, the derived cuts.

We now prove that these single-tree methods are correct.

\begin{proposition}
  \label{prop:adapt_cuts}
  Let $\alpha_1 > \alpha_2 \geq 1$ be given.
  Assume an $(\alpha_1,0)$-NE was found using
  Algorithm~\ref{algorithm:BC} and let $\tilde{N}$ be the respective
  set of unexplored nodes.
  Moreover, let $\hat{x}$ be an $(\alpha_2,0)$-NE.
  Then, the point $\hat{z} \define (\hat{x}, 0, \VBR(\hat{x}),
  \pi(\hat{x}))$ is feasible for one of the nodes in $\tilde{N}$.
\end{proposition}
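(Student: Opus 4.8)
The plan is to pin down the unique leaf of the (already built) search tree whose branching region contains $\hat{x}$, to show via the correctness invariant that $\hat{z}$ lies in its feasible set, and to conclude that this leaf was not pruned, hence belongs to $\tilde{N}$. The engine of the whole argument is the monotonicity of the $(\alpha,0)$-equilibrium condition in the multiplicative factor.

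First I would record the monotonicity observation: since $\alpha_1 > \alpha_2 \geq 1$ and the best-response values are non-negative in this section (where costs are non-negative, so that the multiplicative relaxation is meaningful), every $(\alpha_2,0)$-NE is also an $(\alpha_1,0)$-NE. Indeed, for each player~$i$ we have $\pi_i(\hat{x}) \leq \alpha_2 \VBR_i(\hat{x}_{-i}) \leq \alpha_1 \VBR_i(\hat{x}_{-i})$, where the second inequality uses $\VBR_i(\hat{x}_{-i}) \geq 0$ together with $\alpha_1 > \alpha_2$. In particular $\hat{x}$ is an $(\alpha_1,0)$-NE, so $\hat{z} = (\hat{x},0,\VBR(\hat{x}),\pi(\hat{x}))$ is an $(\alpha_1,0)$-equilibrium tuple.

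Next I would locate $\hat{x}$ in the tree. The branching constraints partition the integer-feasible points among the leaves: following $\hat{x}$ down from the root, at every branching on a fractional variable $\hat{x}$ satisfies exactly one of the two child constraints, so there is a unique leaf~$t^*$ with $\hat{x} \in B_{t^*}$. Since $\hat{z}$ is an $(\alpha_1,0)$-equilibrium tuple, the invariant established in the proof of \Cref{thm:correctness_algBC} (applied to the $\alpha_1$-run that produced $\tilde{N}$) gives $\hat{z} \in \bigcup_{t \text{ leaf}} \FeasRt$; as membership in $\FeasRt$ forces $\hat{x} \in B_t$ and $t^*$ is the only leaf with $\hat{x}$ in its branching region, we conclude $\hat{z} \in F_{t^*}$.

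Then I would argue that $t^*$ was not pruned and finish under $\alpha_2$. Because $\hat{z} \in F_{t^*}$, the node is feasible, and since $\hat{z}$ attains objective value $\lambda = 0 \leq 0$, the optimal objective of \eqref{model:node_t} at $t^*$ is non-positive. Thus neither pruning criterion of \Cref{algorithm:BC} (infeasibility or strictly positive optimal objective) applied, and since at the moment the $(\alpha_1,0)$-NE was found the leaves are exactly the pruned nodes together with $\tilde{N}$, we obtain $t^* \in \tilde{N}$. Finally, decreasing the factor to $\alpha_2$ changes only the constraints \eqref{eq: LambdaBound}; with $\lambda = 0$ these now read $\pi_i(\hat{x}) \leq \alpha_2 \VBR_i(\hat{x}_{-i})$, which holds because $\hat{x}$ is an $(\alpha_2,0)$-NE, while the branching region $B_{t^*}$, the proxy bounds $\VBR(\hat{x}) \leq \proxy^+$, $\pi(\hat{x}) \geq \piproxy^-$, and the cuts $C_{t^*}$ are independent of $\alpha$. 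Hence $\hat{z}$ remains feasible for the node $t^* \in \tilde{N}$. I expect the main obstacle to be the monotonicity step, together with the implicit non-negativity of the best-response values it requires; once the inclusion $\Nes_{(\alpha_2,0)} \subseteq \Nes_{(\alpha_1,0)}$ is in place, the partition-of-leaves argument and the feasibility checks are routine.
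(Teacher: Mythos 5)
Your proof is correct and takes essentially the same route as the paper's: monotonicity (every $(\alpha_2,0)$-NE is an $(\alpha_1,0)$-NE), survival of $\hat{z}$ under branching (by integrality), under cutting (by validity of \cut[s] for $(\alpha_1,0)$-\eqtuple[s], which you access through the invariant from the proof of Theorem~\ref{thm:correctness_algBC}), and non-prunability of its node ($\hat{z}$ is feasible with objective value~$0$). Your two refinements---flagging that the monotonicity step needs $\VBR_i(\hat{x}_{-i}) \geq 0$, which the paper asserts without comment, and verifying feasibility after the constraints~\eqref{eq: LambdaBound} are updated to $\alpha_2$---go slightly beyond what the paper's proof records, but the underlying argument is the same.
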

\begin{proof}
  Note that any $(\alpha_2,0)$-NE is also an $(\alpha_1,0)$-NE.
  Thus, since $\hat{x}$ is an $(\alpha_2,0)$-NE, the point $\hat{z}$
  is feasible for \eqref{model:CpiHPR}, i.e., for the root-node
  problem of the tree search to compute an $(\alpha_1,0)$-NE.

  There are three operations performed on nodes in
  Algorithm~\ref{algorithm:BC} in which the feasible point
  $\hat{z}$ could be removed:
  branching, cutting, and pruning.
  Since $\hat{z}$ satisfies all integrality constraints, no branching
  constraint can cut off $\hat{z}$.
  Again, because $\hat{x}$ is an $(\alpha_1,0)$-NE as well,
  \acut for $\alpha_1$ does not cut off $\hat{z}$.
  Finally, $\hat{z}$ is feasible and has an objective value of $0$, so
  it cannot be removed by pruning.
\end{proof}

From this it immediately follows that for a given valid upper
bound~$\tilde{\alpha}$, both variants compute $\alpha^{\min}$ (up to the
tolerance for the interval size of the binary search).
To this end, set $\alpha_1 \define \tilde{\alpha}$ as well as $\alpha_2
\define \alpha^{\min}$ and apply Proposition~\ref{prop:adapt_cuts}.

Note that the same idea can be applied to compute best-approximate
$(0,\beta^{\min})$-NE and, via discretization of one of the two
approximation values, an approximate Pareto-frontier can be sampled as
well.

%%% Local Variables:
%%% mode: latex
%%% TeX-master: "approximate-gne-via-bnc-preprint"
%%% End:

\section{Numerical Results for the NEP Case}
\label{sec:numerical-results}

In this section, we present numerical results on the computation
of the minimum approximation value $\alpha^{\min} \in \R$ such that an
$(\alpha^{\min},0)$-NE exist.
Here, we restrict ourselves again to the case in which every player has the same
approximation factor $\alpha^{\min}$ and slightly abuse notation in writing
$(\alpha^{\min},0)$-NE instead of $((\alpha^{\min},\ldots,\alpha^{\min}),0)$-NE.
We discuss the implementation details as well as the software and
hardware setup in Section~\ref{sec:impl-deta}.
Then, we present the considered game in
Section~\ref{sec:test-instances}, together with a description of the
generation of instances and the choice of parameter values.
Finally, the numerical results are discussed in
Section~\ref{sec:analysis-of-the-results}.

%%%%%%%%%%%%%%%%%%%%%%%%%%%%%%%%%%%%%%%%%%%%%%%%%%%%%%%%%%%%%%%%%%%%%%%%%%%%%%
\subsection{Implementation Details}
\label{sec:impl-deta}

The computations have been executed on a single core
Intel Xeon Gold 6126 processor at \SI{2.6}{GHz} with \SI{4}{GB} of
RAM.
The code is implemented in \textsf{C++} and compiled with
\textsf{GCC}~13.1.
We consider a strategy profile~$x$ to be an $(\alpha,0)$-NE if
$\pi_i(x) \leq \alpha \Phi_i(x_{-i}) +10^{-8}$ holds for each
player~$i$.
For the pruning step in Line~\ref{line: Infeasible}, we check
if the objective value is greater than $10^{-5}$.
In addition, in Lines~\ref{line:step_1} and~\ref{line: SolveBR},
we solve MIQCPs or MIQPs using \textsf{Gurobi}~12.0
\parencite{gurobi} with the parameter \textsf{feasTol} set to
$10^{-9}$, the parameter \textsf{numericFocus} set to 3, and the parameter
\textsf{MIPGap} set to its default when
solving the node problem and set to $0$ when solving the best-response
problems.
Finally, a cut is added in Line \ref{line: Cut} for each player~$i \in
N^\proxy(x^*,\proxy^*)$ and if the violation of the
produced cut evaluated at the optimal
solution~$(x^*,\lambda^*,\proxy^*,\piproxy^*)$ to the current
node problem is greater than $5 \cdot 10^{-6}$.
All the non-default parameter values have been chosen based on
preliminary numerical testing.

The exploration strategy of the branching scheme is depth-first
search, while the variable chosen for branching is the most
fractional one. In case of a tie, the smallest index is chosen.
While the performance of our method most likely would benefit from
more sophisticated node selection strategies and branching rules,
their study and implementation is out of scope of this paper.

%%%%%%%%%%%%%%%%%%%%%%%%%%%%%%%%%%%%%%%%%%%%%%%%%%%%%%%%%%%%%%%%%%%%%%%%%%%
\subsection{Implementation Games}
\label{sec:test-instances}

We study a  model of \textcite{Kelly98}  in the domain of TCP-based
congestion control.
To this end, we consider a directed graph~$G=(V,E)$ with nodes~$V$ and
edges~$E$.
The set of players is given by $N= \set{1, \dots, n}$ and each
player~$i\in N$ is associated with an end-to-end pair $(s_i,t_i)\in
V\times V$.
The strategy~$x_i$ of player~$i \in N$ represents an integral
$(s_i, t_i)$-flow with a flow value equal to her demand $d_i \in
\nonNegInts$.
Thus, the strategy set of a player $i \in N$ is described by
\begin{align}
  \label{eq: DFGStrat}
  X_i\define \defset{ x_i\in \Z_+^E}{ A_Gx_i = b_i} \cup
  \set{0},
\end{align}
i.e., the union of the $0$-flow and the flow
polyhedron of player $i$ with $A_G$ being the arc-incidence matrix of
the graph~$G$ and~$b_i$ being the vector with $(b_i)_{s_i} = d_i$,
$(b_i)_{t_i} = -d_i$, and $0$ otherwise.
Note that this allows players to not participate in the game because
$x_i
= 0$ is a feasible strategy.
All players want to maximize their utility given by $\mu_i^\top x_i$
 for player $i$ choosing strategy~$x_i$ for a given vector $\mu_i\in
\R^E_{\geq 0}$.

In addition to the set~$N$ of players, there is a
central authority, which aims to determine a price vector~$p^*\in \R_{\geq
0}^E$
for the edges with the goal to (weakly) \emph{implement} a certain
edge load vector $u \in \R^E_{\geq 0}$, i.e., the authority wants to
determine a price vector $p^*$ such that there exists a strategy
profile~$x^*$ of the players in~$N$ with the following properties.
\begin{enumerate}[label=(\roman*)]
\item \label{impl: 1} The load is at most~$u$, i.e.,
  $\ell(x^*)\define \sum_{i \in N} x_i^* \leq u$.
\item \label{impl: 2} The strategy~$x^*$ is an equilibrium for the
given~$p^*$,
  i.e.,
  \begin{equation*}
    x_i^* \in \argmax \Defset{(\mu_i - p^*)^\top x_i}{x_i \in X_i}
  \end{equation*}
  holds for all $i \in N$.
\item \label{impl: 3} The edges for which the targeted load is not
  fully used have zero price, i.e., $\ell_e(x^*) < u_e$ implies $p_e^*
=0$,
\item \label{impl: 4} The price is bounded from above, i.e., $p^*\leq
  p^{\max}$.
\end{enumerate}
Here, $p^{\max}\in \R^E_{\geq 0}$ is some upper bound on the prices
satisfying
\begin{equation*}
  p^{\max}_e > \abs{E} \cdot \max_{e' \in E}(\mu_{i})_{e'} \cdot
  \max_{e' \in E}c_{e'}
\end{equation*}
for all $i \in N$ and $e \in E$.

For the setting in which players are allowed to send fractional arbitrary amounts of flow,
\textcite{Kelly98}
proved that every vector $u$ is weakly implementable.
Allowing a fully fractional distribution of the flow, however, is not
possible in some applications---the notion of data packets as
indivisible units seems more realistic.
The issue of completely fractional routing versus integrality
requirements has been explicitly addressed by \textcite{Orda93},
\textcite{HarksK16b}, and \textcite{wang2011}. Recently,
\textcite{HarksSchwarzPricing} introduced a unifying framework for
pricing in nonconvex resource allocation games,
which, in particular, encompasses the integrality-constrained version
of the model
originally studied by \textcite{Kelly98}. They proved (Corollary~7.8)
that for the case of identical utility vectors $\mu_i = \mu$, $i\in
N$, and
same sources $s_i = s$, $i\in N$, any integral vector is weakly
implementable.
However, in the general case, the implementability of a vector $u$ is
not guaranteed.
This raises the question of which vectors are implementable and which
are not.

We can model this question as a NEP with $n+1$ players in which the
first $n$~players correspond to the player set~$N$ and the $(n+1)$-th
player is the central authority.
We denote by $(x,p)$ a strategy  profile and set the costs
to the negated utility $\pi_i(x_i,x_{-i},p) = (p-\mu_i)^\top  x_i$
for $i \in N$ and the costs of the central authority
to  $\pi_{n+1}(p,x) = (u - \ell(x))^\top p$.
The strategy spaces are given by $X_i$ in \eqref{eq: DFGStrat} for $i
\in N$ and by $X_{n+1}\define\defset{p \in \R ^E_{\geq 0}}{p\leq p^{\max}}$.

Lemma~5.1 in \textcite{DSHS25} shows that a tuple
$(x^*,p^*)$ weakly implements~$u$ if and only if $(x^*,p^*)$ is an
exact equilibrium of the constructed NEP.
Note that in \textcite{DSHS25}, the authors consider
a capacity-constrained version of this problem,
leading to a GNEP reformulation. Yet, for sufficiently large
capacities, the problems become equivalent.
In the computational study of \textcite{DSHS25}, the authors proved
the non-existence of implementing prices for several instances.
Here, we now apply our methods to compute a minimal multiplicative
factor $\alpha^{\min}$ such that a corresponding approximate
equilibrium exists for the NEP versions of these instances.
\jscom[journal]{In this case, additive approximation makes more sense
  in my opinion.}

We use the 450 implementation game instances from \textcite{DSHS25},
where more details can be found.
To obtain NEP versions of these instances we neglect the decisions of
the other players in the respective capacity constraints.
Regarding the parameters of the binary search used in the three
variants described in Section~\ref{sec:best-approximate}, we set the
initial value of $\alpha^+$ to 10, the interval size tolerance
is set to $0.1$, and the factor $F$ equals 10.
We use a time limit of \SI{3600}{s} and a memory limit of \SI{3}{GB}
of RAM for solving the node problems.

%%%%%%%%%%%%%%%%%%%%%%%%%%%%%%%%%%%%%%%%%%%%%%%%%%%%%%%%%%%%%%%%%%%%%%%%%%%
\subsection{Discussion of the Results}
\label{sec:analysis-of-the-results}

Figure~\ref{figure:ECDF_implementationGame_characteristics} shows the
results on the NEP implementation games. Among the 450 instances, 25
were solved by the \minapprox[,] 42 by the variant
\reusetreesearch{}, and 37 by the variant \reusewithoutcuts{} of our \BinBC[.]
For this figure we used a computation time of \SI{3600}{s}, i.e., the time
limit, for those instances that run into the memory limit for the node
problems.
According to the left figure, the \minapprox seems to be less
efficient than the two variants of our \BinBC[.]
\reusewithoutcuts{} seems a bit more efficient than \reusetreesearch{}
for easier instances (solved in less than one minute),
but seems less efficient for harder instances.
Hence, it seems that tighter node relaxations help for hard instances
while the increased size of the models harms the solution process for
the easier ones.
Thus, the overall best-performing variant is \reusetreesearch{}.
This variant proves that $\alpha^{\min} \leq 10$ for 242 out of the
450 instances (\SI{54}{\%}), as seen in the right figure.
For this figure, we used the best upper bound found even for
instances which later stopped because of the time or memory limit.
In addition, it proves that an exact NE exist in \SI{8}{\%} of the
instances.
Finally, it is never able to prove that a $(10,0)$-NE does not exist.

\begin{figure}
\includegraphics[width=0.5\linewidth]{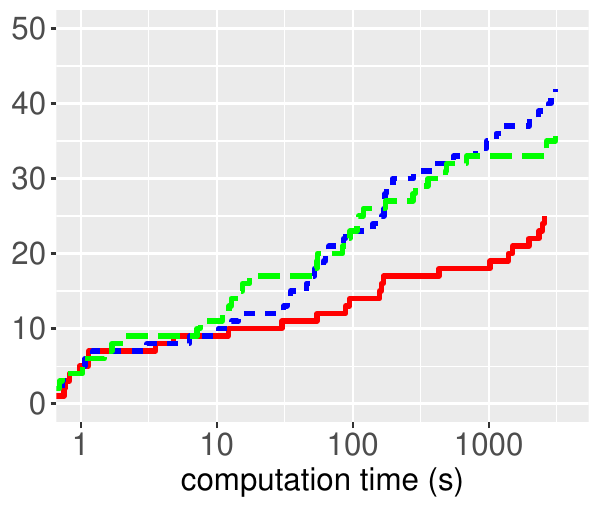}%
\includegraphics[width=0.5\linewidth]{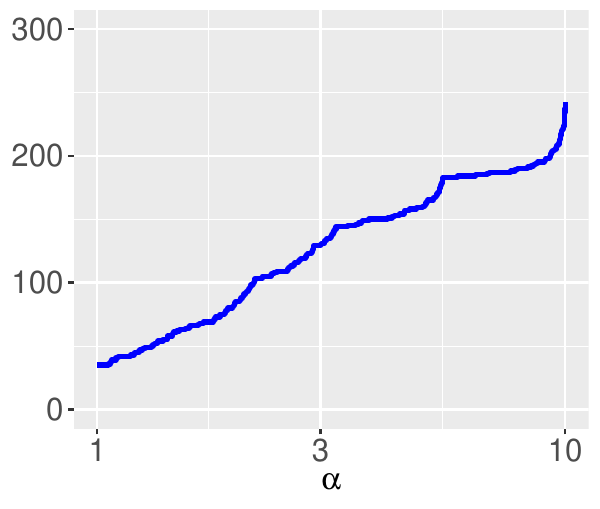}
\caption{Left: Number of instances solved with respect to the
  computation time for instances solved for the \minapprox
  (solid red line), the \reusetreesearch{} variant (dashed blue line)
  and the \reusewithoutcuts{} variant (long-dashed green line).
  Right: Number of instances for which an $(\alpha,0$)-NE or better
  was found using the \reusetreesearch{} variant.}
  \label{figure:ECDF_implementationGame_characteristics}
\end{figure}

%%% Local Variables:
%%% mode: latex
%%% TeX-master: "approximate-gne-via-bnc-preprint"
%%% End:

\section{Conclusion}
\label{sec:conclusion}

We presented a B\&C method for computing
$(\mappro,\addappro)$-NE for standard and generalized Nash equilibrium
problems with mixed-integer variables.
For the GNEP case, the method relies on the existence of suitable cuts,
which we derive under appropriate assumptions using intersection cuts.
For the special case of NEPs, we consider a different type of cut
and show that our method terminates in finite time provided that
each player has only finitely many distinct best-response sets.
Building upon this B\&C approach, we further introduce a single-tree
binary-search method to compute best-approximate equilibria under some
simplifying assumptions.
A first numerical case study for a class of mixed-integer flow games
shows the applicability of the approach.

Future work based on this contribution includes studying weaker
assumptions for deriving intersection cuts; e.g., we believe that the
assumption of linear constraints can be relaxed to constraints being
convex.
Moreover, finiteness of the B\&C method for the GNEP case seems to be
within reach under suitably chosen additional assumptions and,
finally, the binary-search method can be extended to actually sample
the approximate Pareto-frontier of $(\mappro,\addappro)$-NEs.

%%% Local Variables:
%%% mode: latex
%%% TeX-master: "approximate-gne-via-bnc-preprint"
%%% End:

\section*{Acknowledgements}

This research has been funded by the Deutsche Forschungsgemeinschaft
(DFG) in the project 543678993 (Aggregative gemischt-ganzzahlige
Gleichgewichtsprobleme: Existenz, Approximation und Algorithmen).
We acknowledge the support of the DFG.
The computations were executed on the high performance cluster
``Elwetritsch'' at the TU Kaiserslautern, which is part of the
``Alliance of High Performance Computing Rheinland-Pfalz'' (AHRP).
We kindly acknowledge the support of RHRK.

%%% Local Variables:
%%% mode: latex
%%% TeX-master: "branch-and-cut-for-ipgs-preprint"
%%% End:

\printbibliography

\end{document}

%%% Local Variables:
%%% mode: latex
%%% TeX-master: t
%%% End: